\renewcommand{\@biblabel}[1]{\quad#1.}
\date{}
\newcommand{\internaltodo}[1]{}
\newcommand{\todosr}[1]{\todo[author=stefan,backgroundcolor=blue!20!white]{#1}}
\newcommand{\N}{\mathds{N}}
\newcommand{\R}{\mathds{R}}
\newcommand{\UF}{\mathcal{U}}
\newcommand{\F}{\mathfrak{F}}
\newcommand{\eps}{\varepsilon}
\newcommand{\ND}{\mathcal{N}}
\newcommand{\set}[1]{\left\{#1\right\}}
\newcommand{\abs}[1]{\left|#1\right|}
\newcommand{\To}{\mathop{\rightarrow}}
\newcommandx{\E}[2][2=\empty]{\ifthenelse{\equal{#2}{\empty}}{\mathrm{E}}{\mathrm{E}_{#2}}\!\left(#1\right)}
\newcommand{\Var}[1]{\mathrm{Var}\left(#1\right)}
\newcommand{\Prob}[1]{\mathrm{Pr}\!\left\{#1\right\}}
\renewcommand{\vec}{\bm}
\newcommand{\supp}{\text{supp}}
\renewcommand{\vec}{\mathbf}
\theoremstyle{plain}
\newtheorem{lem}{Lemma}
\newtheorem{thm}{Theorem}
\theoremstyle{definition}
\newtheorem{defn}{Definition}
\newtheorem{exa}{Example}
\newtheorem{rem}{Remark}
\newcommand{\rot}[1]{\begin{turn}{-90}#1\end{turn}}
\newcounter{daggerfootnote}
\newcommand*{\daggerfootnote}[1]{%
	\setcounter{daggerfootnote}{\value{footnote}}%
	\renewcommand*{\thefootnote}{\fnsymbol{footnote}}%
	\footnote[2]{#1}%
	\setcounter{footnote}{\value{daggerfootnote}}%
	\renewcommand*{\thefootnote}{\arabic{footnote}}%
}
\begin{document}

\vspace*{0.35in}

\begin{flushleft}
{\Large \textbf\newline{Decisions with Uncertain Consequences -- A Total
Ordering on Loss-Distributions}\daggerfootnote{This is a correction to the published version, available from \url{https://journals.plos.org/plosone/article?id=10.1371/journal.pone.0168583}. The correction has been submitted to the journal office in Feb. 2022, and is pending for publication, in which case this version will become replaced.}}
\newline
\\
Stefan Rass\textsuperscript{1,3,*}, %
Sandra K\"{o}nig\textsuperscript{2}, %
Stefan Schauer\textsuperscript{2}, %
\\
\bigskip
\bf{1} Universit\"{a}t Klagenfurt, Department of Artificial Intelligence and Cybersecurity,
Klagenfurt, Austria
\\
\bf{2} Austrian Institute of Technology, Safety \& Security Department,
Klagenfurt, Austria
\\
\bf{3} LIT Secure and Correct Systems Lab, Johannes Kepler University Linz, Austria
\\
\bigskip

%
%





* stefan.rass@aau.at

\end{flushleft}
\section*{Abstract}
Decisions are often based on imprecise, uncertain or vague information.
Likewise, the consequences of an action are often equally unpredictable, thus
putting the decision maker into a twofold jeopardy. Assuming that the effects
of an action can be modeled by a random variable, then the decision problem
boils down to comparing different effects (random variables) by comparing
their distribution functions. Although the full space of probability
distributions cannot be ordered, a properly restricted subset of
distributions can be totally ordered in a practically meaningful way. We call
these \emph{loss-distributions}, since they provide a substitute for the
concept of loss-functions in decision theory. This article introduces the
theory behind the necessary restrictions and the hereby constructible total
ordering on random loss variables, which enables decisions under uncertainty
of consequences. Using data obtained from simulations, we demonstrate the
practical applicability of our approach.




\section{Introduction}\todo{the introduction has been rewritten almost
entirely, towards a better connection with the text and to be shorter and
more condensed.}In many practical situations, decision making is a matter of
urgent and important choices being based on vague, fuzzy and mostly empirical
information. While reasoning under uncertainty in the sense of making
decision with known consequences under uncertain preconditions is a
well-researched field (cf.
\cite{Shortliffe1990,Pearl1988,Buntine1995,Jensen2002,Halpern2003,Evans2004,Koski2009}
to name only a few), taking decisions with \emph{uncertain consequences} has
received substantially less attention. This work presents a decision
framework to take the best choice from a set of options, whose consequences
or benefit for the decision maker are available only in terms of a random
variable. More formally, we describe a method to choose the best among two
possible random variables $R_1, R_2$ by constructing a novel stochastic order
on a suitably restricted subset of probability distributions. Our ordering
will be total, so that the preference between two actions with random
consequences $R_1, R_2$ is always well-defined and a decision can be made. As
it has been shown in \cite{Szekli1995,Stoyan2002}, there exist several
applications where such a framework of decision making on abstract spaces of
random variables is needed.

To illustrate our method, we will use a couple of example data sets, the
majority of which comes from the risk management context. In risk management, decisions
typically have uncertain consequences that cannot be measured by a
conventional von Neumann-Morgenstern utility function. For example, a
security incident in a large company can either be made public, or kept
secret. The uncertainty in this case is either coming from the public
community's response, if the incident is made public (as analyzed by, e.g.,
\cite{Busby2016}), or the residual risk of information leakage (e.g., by
whistleblowing). The question here is: Which is the better choice,
given that the outcomes can be described by random variables? For such a
scenario, suitable methods to determine the consequence distributions using
simulations are available \cite{Busby2016}, but those methods don't support
the decision making process directly.

Typically, risk management is concerned with extreme events, since small
distortions may be covered by the natural resilience of the analyzed system
(e.g., by an organization's infrastructure or the enterprise itself, etc.). For this reason, decisions normally
depend on the distribution's tails. Indeed, heavy- and fat-tailed distributions
are common choices to model rare but severe incidents in general risk
management \cite{Embrechts2003,McNeil2005}. We build our construction with
this requirement of risk management in mind, but originate from the recognized
importance that the moments of a distribution play for decision making (cf.
\cite{Eichner&Wagener2011}). In section \ref{sec:the-decision-framework}, we
illustrate a simple use of the first moment in this regard that is
common in IT risk management, to motivate the need to include more
information in a decision. Interestingly, the ordering that we define here is
based on the full moment sequence (cf. Definition
\ref{def:preference}), but implies similar conditions as other stochastic
orders, only with an explicit focus on the probability mass located in the
distribution's tails (cf. Theorem \ref{thm:tail-bounds}). Further, we pick
some example data sets from risk management applications in Section
\ref{sec:empirical-examples}, and demonstrate how a decision can be made
based on empirical data.

The main contribution of this work is twofold: while any stochastic order
could be used for decision making on actions with random variables describing
their outcome, not all of them are equally suitable in a risk management
context. The ordering we present in this article is specifically designed to
fit into this area. Second, the technique of constructing the ordering is new
and perhaps of independent scientific interest having applications beyond our
context. In the theoretical parts, this work is a condensed version of
\cite{Rass2015a,Rass2015b}, whereas it extends this preliminary research by
practical examples and concrete algorithms to efficiently choose best actions
despite random consequences and with a sound practical meaning.

\section{Preliminaries and Notation}\todo{this section has been made less verbose (to tighten this bit)}

Sets, random variables and probability distribution functions are denoted as
upper-case letters like $X$ or $F$. Matrices and vectors are denoted as
bold-face upper- and lower-case letters, respectively. The symbols $\abs{X},
\abs{x}$ denote the cardinality of the finite set $X$ or the absolute value
of the scalar $x\in\R$. The $k$-fold cartesian product (with $k=\infty$
permitted) is $X^k$, and $X^\infty$ is the set of all infinite sequences
$(a_n)_{n\in\N}=(a_1, a_2, a_3, \ldots)$ over $X$. Calligraphic letters like
$\F$ denote families (sets) of sets or functions. The symbol $^*\R$ denotes
the space of hyperreal numbers, being a certain quotient space constructed as
$^*\R=\R^\infty\slash\mathcal{U}$, where $\mathcal{U}$ is a free ultrafilter.
We refer to \cite{Robinson1966,Rass2015c} for details, as $^*\R$ is only a
technical vehicle whose detailed structure is less important than the fact
that it is a totally ordered field. Our construction of a total ordering on
loss distributions will crucially hinge on an embedding of random variables
into $^*\R$, where a natural ordering and full fledged arithmetic are already
available without any further efforts.

The symbol $X\sim F_X$ means the random variable (RV) $X$ having distribution
$F_X$, where the subscript is omitted if things are clear from the context.
The density function of $F_X$ is denoted by its respective lower-case letter
$f_X$. We call an RV \emph{continuous}, if it takes values in $\R$, and
\emph{discrete}, if it takes values on a countably infinite set $X$. A
\emph{categorical} RV is one with only finitely many, say $n$, distinct
outcomes. In that case, the density function can be treated as a vector $\vec
f_X\in\R^n$. 


\section{The Decision Framework}\label{sec:the-decision-framework}

Our decision problems will concern choosing actions of minimal \emph{loss}.
Formally, if $A$ is a set of actions, from which we ought to choose the best
one, then a \emph{loss-function} is usually some mapping $L:A\to\R^+$, so
that an optimal choice from $A$ is one with minimal loss under $L$ (see
\cite{Robert2001} for a full-fledged treatment and theory in the context of
Bayesian decision theory). In IT risk management\internaltodo{@Sandra: hier
war der Zusatz zu IT Risikomanagement wichtig, da diese Formel sicher nicht
ueberall so zum Einsatz kommt...da muessen wir vorsichtig sein...} (being
used to illustrate
our methods later in Section \ref{sec:empirical-examples}), 
risk is often quantified by
\begin{equation}\label{eqn:it-risk-formula}
    \text{ risk } = \text{damage} \times \text{ likelihood},
\end{equation}
which roughly resembles the idea of understanding risk as the expectation of
damage.
In this quantitative approach, the damage is
captured by the aforementioned loss function $L$, whereas the likelihood is
obtained from the distribution of the random event causing the damage.

However, losses can not always be measured precisely.
For the introductory example, consider the two actions $a_1=$
``publish the incident'' and $a_2=$``keep the incident secret''. Either
choice has unpredictable consequences so we replace the deterministic loss-function by a random variable.
That is,  let $a_1,a_2\in A$ be two arbitrary actions, and write
$X:=L(a_1)$ and $Y:=L(a_2)$, respectively, for the \emph{random} losses implied by taking these actions.
The challenge now is to make a decision that minimizes the risk when losses are random.

Obviously, comparing $X$ and $Y$ in the way suggested by
\eqref{eqn:it-risk-formula} has some shortcomings, as it is easy to construct random variables
with equal mean but highly different variance (the same issue
	would also exist in game theory \cite{Gibbons1992}, where the utility of
	mixed strategies is exactly the expectation of outcomes but normally
	disregards further moments). For the example of two
Gaussian variables $X\sim\ND(5,1), Y\sim\ND(5,10)$, the expectations are
equal, but actions resulting in losses measured by $Y$ are undesirable
relative to $X$, since the fluctuation around the mean for $Y$ is
considerably larger than for actions with consequences described by $X$.
An apparent quick fix is to take the variance into account for a decision.
However, the previous issue is still not mitigated, since it is equally easy
(yet only slightly more involved) to construct two random variables with
equal first and second moment, but with different third moments (Example
\ref{exa:equal-first-two-moments} will give two such distributions
explicitly). Indeed, the third moment can be taken into account in the
straightforward way, which has been discussed in the literature on risk
attitudes; see \cite{Eichner&Wagener2011,Chiu2010,Wenner2002} for a few
starting references.
Towards a more sophisticated
approach, we will in the following use the whole object (the random variable) rather that
a few representative values thereof to make a decision.

\subsection{The Usual Stochastic Order $\leq_{st}$}
Choosing a best action among $\set{a_1,a_2}$, we ought to compare the random
variables $X,Y$ in some meaningful way. Without any further restrictions on
the support or distribution, we may take the \emph{usual stochastic order}
\cite{Shaked2006} $\leq_{st}$ for that purpose, which calls $X\leq_{st} Y$ if
and only if
\begin{equation}\label{eqn:usual-stochastic-order}
  \Pr(X>x)\leq \Pr(Y>x)\quad\text{ for all }x\in(-\infty,\infty).
\end{equation}
Condition \eqref{eqn:usual-stochastic-order} can be stated equivalently by
demanding $\E{\phi(X)}\leq \E{\phi(Y)}$ for all increasing
functions $\phi$ for which the expectations
exist (so-called test-functions). In the latter formulation, it is easy to see that, for example, the
$\leq_{st}$-ordering in particular entails $\E{X}\leq \E{Y}$, so that a
comparison based on \eqref{eqn:it-risk-formula} comes out the same under
$\leq_{st}$. Moreover, in restricting $X$ and $Y$ to take on only positive
values, as our above definition of $L:A\to\R^+$ implies, $X\leq_{st} Y$
implies that all moments are in pairwise $\leq$-order, since the respective
functions $\phi(x)=x^k$ delivering them are all increasing on $\R^+$. Under
this restriction, comparisons based on the second and third moment
\cite{Chiu2010} are also covered under $\leq_{st}$.

\subsection{Generalizing $\leq_{st}$: The $\preceq$-Ordering}
In cases where it is sufficient to lower risk under an acceptance threshold,
rather than truly minimizing them, we may indeed relax the
$\leq_{st}$-ordering in several ways: we can require
\eqref{eqn:usual-stochastic-order} only for large damages in $(x_0,\infty)$
for a threshold $x_0$ that may be different for various application domains,
or we may not use all increasing functions, but only a few selected ones (our
construction will use the latter and entail the former relaxation). Given
that moments are being used to analyze risks and are related to risk
attitudes \cite{Chiu2010}, let us take the functions $\phi(x)=x^k$ for
$k\in\N$, which are all increasing on $\R^+$. To assure the existence of all
moments $\E{\phi(X)}<\infty$ and the monotony of all members in our
restricted set of test-functions, we impose the following assumptions on a
general random variable $R$, which we hereafter use to quantitatively model
``risk'':

\begin{defn}\label{asm:finity-of-repair-costs}
Let $\F$ be the set of all random variables $R$, who satisfy the following
conditions:
\begin{itemize}
  \item $R$ has a known distribution $F$ with compact support (note that
      this implies that $R$ is upper-bounded).
  \item $R\geq 1$ (w.l.o.g., since as $R$ is bounded, we can shift it into
      the region $[1,\infty)$).
  \item The probability measure induced by $F$ is either discrete or
      continuous and has a density function $f$. For continuous random
      variables, the density function is assumed to be continuous, and
      piecewise polynomial over a finite partition of the
      support.\todosr{die stueckweise polnomielle Dichte auf einer
      endlichen Zerlegung war ja der wesentliche Reparaturansatz}
\end{itemize}
\end{defn}
Requirement 1 assures that all moments exist. Requirements 2 and 3 serve
technical reasons that will be made clear in Lemma
\ref{lem:ordering-invariance}. In brief, these two assure that the ordering
obtained will be total, and simplifies proofs by defining the order as equal
to the natural ordering of hyperreal numbers. The requirement of the density
to be piecewise polynomial over a finite number of segments is necessary to
avoid families of distributions with alternating moments, such as were
constructed by \cite{burgin_remarks_2021}. Such families include even some
benign members with monotone densities, and have an order that explicitly
depends on the ultrafilter, which we do not want (this will be made rigorous
in Theorem \ref{thm:ordering-invariance} below). Assuming that the density,
if it is continuous, has a finite piecewise polynomial definition, excludes
these pathological cases. This exclusion is mild, since it can be shown
\cite[Lemma 3.1]{rass_game_2021} that most distributions can be approximated
to arbitrary precision by distributions satisfying the requirements of
Definition \ref{asm:finity-of-repair-costs}. The permission to restrict our
attention to moments rather than the whole random variable is given by the
following well known fact:

\begin{lem}\label{lem:uniqueness}
Let two random variables $X,Y$ have their moment generating functions
$\mu_X(s), \mu_Y(s)$ exist within a neighborhood $U_\eps(0)$. Assume that
$m_X(k)$ $:=$ $\E{X^k}$ $=$ $\E{Y^k}=:m_Y(k)$ for all $k\in\N$. Then $X$ and
$Y$ have the same distribution.
\end{lem}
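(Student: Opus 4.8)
The plan is to deduce the equality of the full laws from the equality of all moments by passing through the moment generating function and, ultimately, the characteristic function, which is known to determine a distribution uniquely. The hypothesis that $\mu_X$ and $\mu_Y$ exist on a genuine two-sided neighborhood $U_\eps(0)=(-\eps,\eps)$ is exactly what makes this route available: it is the condition that rules out the classical \emph{indeterminate} moment problem, in which distinct laws (e.g.\ certain lognormal-type distributions) can share every moment.

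First I would establish that the moments control the coefficients of a convergent power series. Since $\mu_X(s)=\E{e^{sX}}$ is finite for all $\abs{s}<\eps$, fixing any $0<s_0<\eps$ gives $\E{e^{s_0\abs{X}}}\le\mu_X(s_0)+\mu_X(-s_0)<\infty$, and from the elementary inequality $\abs{X}^k\le k!\,s_0^{-k}e^{s_0\abs{X}}$ one obtains the moment bound $\abs{m_X(k)}\le k!\,s_0^{-k}\,\E{e^{s_0\abs{X}}}$. Hence the formal series $\sum_{k\ge 0}\tfrac{m_X(k)}{k!}s^k$ has radius of convergence at least $s_0$, and a dominated-convergence argument with the integrable majorant $e^{s_0\abs{X}}$ identifies its sum with $\mu_X(s)$ on $U_{s_0}(0)$; the same holds for $Y$. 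Because $m_X(k)=m_Y(k)$ for every $k\in\N$ by assumption, the two power series agree coefficientwise, and therefore $\mu_X\equiv\mu_Y$ on a neighborhood of $0$.

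Next I would upgrade this agreement on the real axis to an agreement of characteristic functions. On any compact substrip $\abs{\Re z}\le s_0<\eps$ the complex integrand obeys $\abs{e^{zX}}\le e^{s_0\abs{X}}$ with $\E{e^{s_0\abs{X}}}<\infty$, so Morera's theorem together with Fubini shows that $z\mapsto\E{e^{zX}}$ extends to a function holomorphic on the open strip $\set{z\in\C:\abs{\Re z}<\eps}$, and likewise for $Y$. These two holomorphic functions coincide on the real interval $(-\eps,\eps)$, hence by the identity theorem they coincide on the whole strip, in particular on the imaginary axis. Evaluating at $z=it$ yields $\phi_X(t)=\E{e^{itX}}=\E{e^{itY}}=\phi_Y(t)$ for all $t\in\R$, and the uniqueness theorem for characteristic functions (L\'evy inversion) then forces $F_X=F_Y$, as claimed.

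The main obstacle is the analyticity step, i.e.\ justifying that the expectation defines a genuinely holomorphic function on a complex strip and may be expanded termwise. This is precisely where the two-sided existence of the MGF is indispensable: it supplies the single integrable majorant $e^{s_0\abs{X}}$ that at once dominates the complex integrand, legitimizes the interchange of $\E{\cdot}$ with the power series, and bounds the moments. I would also note that for the random variables of Definition~\ref{asm:finity-of-repair-costs} this obstacle is vacuous: a compactly supported $R$ has a bounded, hence entire, moment generating function, so the hypotheses of the lemma are automatically satisfied in our setting.
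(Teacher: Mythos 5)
Your proof is correct and follows exactly the route the paper only gestures at: the paper's ``proof'' is a one-line sketch deferring to ``well-known facts about power-series and moment-generating functions'' (with a citation), and your argument --- moment bounds from the two-sided MGF, termwise identification of $\mu_X$ with its Taylor series, analytic continuation to the strip, and L\'evy inversion via the characteristic functions --- is the standard fleshed-out version of precisely that approach. No gaps; the key point you correctly isolate (the two-sided neighborhood supplying the integrable majorant $e^{s_0\abs{X}}$) is exactly what distinguishes this from the indeterminate moment problem.
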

\begin{proof}[Sketch] The proof is a simple matter of combining
well-known facts about power-series and moment-generating functions (see
\cite{Rass2015b} for a description).
\end{proof}
In the following, let us write $m_X(k)$ to mean the $k$-th moment of a random
variable $X$. Our next lemma establishes a total relation (so far not an
ordering) between two random variables from $\F$, on which our ordering will
be based:

\begin{lem}\label{lem:ordering-invariance}
For any two probability distributions $F_1,F_2$ and associated random
variables $R_1\sim F_1, R_2\sim F_2$ according to Definition
\ref{asm:finity-of-repair-costs}, there is a $K\in\N$ so that either
$[\forall k\geq K: m_{R_1}(k)\leq m_{R_2}(k)]$ or $[\forall k\geq K:
m_{R_1}(k)\geq m_{R_2}(k)]$.
\end{lem}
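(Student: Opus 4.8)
The plan is to prove the slightly stronger statement that the difference sequence $D(k):=m_{R_1}(k)-m_{R_2}(k)$ has an \emph{eventually constant sign} in $k$; the claimed dichotomy is then immediate, with $K$ any index past which the sign has stabilised (and if $D\equiv 0$ both alternatives hold at once). Everything hinges on pinning down the large-$k$ behaviour of a single moment sequence $m_R(k)=\E{R^k}$ for $R\in\F$, and the key observation is that Definition~\ref{asm:finity-of-repair-costs} yields not merely an asymptotic estimate but an \emph{exact} closed form.

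Concretely, I would fix $R\in\F$ with $\supp R\subseteq[1,b]$ and let $P\ni b$ be the finite set of breakpoints of the partition on which the continuous density $f$ agrees with polynomials of bounded degree (so that $1\le p$ for every $p\in P$, which is exactly where the normalisation $R\ge 1$ is used). Integrating $x^{k}$ against each polynomial piece gives terms $\propto p^{k+j+1}/(k+j+1)$, and collecting the endpoint contributions by their base $p$ and factoring out $p^{k+1}$ produces
\begin{equation*}
  m_R(k)=\sum_{p\in P}p^{k+1}\gamma_p(k),
\end{equation*}
where each $\gamma_p$ is a rational function of $k$ (a finite sum of terms $c/(k+j+1)$). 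A categorical $R$ satisfies the same identity with constant $\gamma_p$, so the two cases merge. This representation is the whole point: it is what the \emph{finite} partition into \emph{polynomial} pieces buys, and it is precisely what the oscillating-moment families excluded by Definition~\ref{asm:finity-of-repair-costs} fail to provide.

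Subtracting the two representations over the common finite base set $P:=P_1\cup P_2$ gives $D(k)=\sum_{p\in P}p^{k+1}\delta_p(k)$ with $\delta_p:=\gamma_p^{(1)}-\gamma_p^{(2)}$ again rational. If all $\delta_p\equiv 0$ then $D\equiv 0$ and we are done (consistently, Lemma~\ref{lem:uniqueness} then forces $F_1=F_2$). Otherwise let $p^\ast:=\max\{p\in P:\delta_p\not\equiv 0\}$. For every remaining base one has $1\le p<p^\ast$, so it is suppressed relative to $p^\ast$ by the exponentially small factor $(p/p^\ast)^{k+1}$, which overwhelms the at most polynomial growth of the rational coefficients; hence for large $k$ the single term $(p^\ast)^{k+1}\delta_{p^\ast}(k)$ dictates the sign of $D(k)$. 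Since a nonzero rational function is nonzero and sign-constant for all large $k$, $\operatorname{sign}D(k)$ coincides with this eventual sign and is therefore constant, which is exactly the asserted alternative.

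I expect the real obstacle to be the tie regime in which $R_1$ and $R_2$ share the same top of support \emph{and} the same leading corrections: there a crude leading-order (Watson-type) estimate is worthless, and one must exploit the exact rational structure to compare arbitrarily deep correction terms and to rule out infinitely many sign changes. Turning the domination heuristic into a uniform bound that fixes an explicit $K$, and verifying that each $\delta_{p^\ast}$ is a genuine ratio of polynomials of controllable degree, is the technical heart of the argument; it is also the step that visibly uses the finite-partition/polynomial hypothesis rather than mere compactness of the support, which by itself would be too weak.
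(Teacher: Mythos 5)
Your proof is correct, and it takes a genuinely different route from the paper's. The paper works directly with $\Delta(k)=\int_\Omega x^k(f_1-f_2)(x)\,dx$ on the common support $\Omega=[1,b^\ast]$: it locates a rightmost compact subinterval $[a,b]$ on which $f_1-f_2$ keeps a fixed sign bounded away from zero, crudely bounds everything to the left by $-\lambda_1\int_1^a x^k\,dx$, and lets the $b^{k+1}$ term beat the $a^{k+1}$ term. You instead exploit the piecewise-polynomial hypothesis to write each moment sequence \emph{exactly} as a finite sum $\sum_{p}p^{k+1}\gamma_p(k)$ over the partition breakpoints, with rational coefficients $\gamma_p$, and read off the eventual sign of the difference from the largest base whose coefficient does not vanish identically. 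Both proofs encode the same intuition (the top of the support dominates the moment growth), but yours makes the finite-partition/polynomial assumption do explicit and essential work: the fact that each $\delta_p$ is a rational function is precisely what forbids infinitely many sign changes, i.e., what excludes the oscillating families of B\"urgin et al.\ that motivated Definition \ref{asm:finity-of-repair-costs}. The paper's proof uses that assumption only implicitly; its case analysis near $b^\ast$ tacitly assumes $f_1-f_2$ has an eventually constant sign approaching the right endpoint, which for merely continuous densities can fail --- so your version is arguably the more airtight of the two. Your closing worry about the ``tie regime'' is already resolved by your own construction: within a fixed base $p$ the coefficient $\delta_p$ is a single rational function, hence either identically zero (pass to the next base) or eventually sign-constant, and no deeper asymptotic expansion is needed. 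One shared caveat: Definition \ref{asm:finity-of-repair-costs} also admits discrete laws on countably infinite compact supports, which neither your finite-base decomposition nor the paper's remark about ``the finite set of values'' covers without an extra limiting argument.
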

The proof of lemma \ref{lem:ordering-invariance} is given in the appendix.
The important fact stated here is that between any two random variables
$R_1,R_2$, either a $\leq$ or a $\geq$ ordering holds \emph{asymptotically}
on the moment sequence. Hence, we can take Lemma
\ref{lem:ordering-invariance} to justify the following relaxation of the
usual stochastic order:

\begin{defn}[$\preceq$-Preference Relation over Probability Distributions]\label{def:preference}
Let $R_1, R_2\in \F$ be two random variables with distribution functions
$F_1,F_2$. We \emph{prefer} $R_1$ \emph{over} $R_2$, respectively the
distribution $F_1$ over $F_2$, written as
\begin{equation}\label{eqn:partial-ordering}
    R_1\preceq R_2\iff F_1\preceq F_2:\iff \exists K\in\N\text{ s.t. }\forall k\geq K: m_{R_1}(k)\leq m_{R_2}(k)
\end{equation}
\emph{Strict preference} is denoted and defined as
\[
    R_1\prec R_2 \iff F_1\prec F_2:\iff \exists K\in\N\text{ s.t. }\forall k\geq K: m_{R_1}(k)<m_{R_2}(k)
\]
\end{defn}
For this definition to be a meaningful ordering, we need to show that
$\preceq$ behaves like other orderings, say $\leq$ on the real numbers. We
get all useful properties almost for free, by establishing an isomorphy
between $\preceq$ and another well known ordering, namely the natural $\leq$
order on the hyperreal space $^*\R$:
\begin{thm}\label{thm:ordering-invariance}
Let $\F$ be according to definition \ref{asm:finity-of-repair-costs}. Assume
every element $X\in\F$ to be represented by hyperreal number $\vec
x=(\E{X^k})_{k\in\N}\in \R^\infty\slash\UF$, where $\UF$ is any free
ultrafilter. Let $X,Y\in\F$ be arbitrary. Then, $X \preceq Y$ if $\vec
x\leq\vec y$ in $^*\R$, irrespectively of $\UF$.
\end{thm}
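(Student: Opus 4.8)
The plan is to reduce the statement to a single combinatorial fact about the ultrafilter and then supply the analysis that secures that fact. Write $A:=\{k\in\N: m_X(k)\leq m_Y(k)\}$. By the definition of the order on $^*\R=\R^\infty/\UF$ we have $\vec x\leq\vec y$ iff $A\in\UF$, whereas by Definition \ref{def:preference} we have $X\preceq Y$ iff $A$ contains a tail $\{k\geq K\}$, i.e.\ iff $A$ is cofinite. A free ultrafilter contains every cofinite set and no finite set, so the whole theorem follows once one knows that for $X,Y\in\F$ the set $A$ is always either finite or cofinite: then $A\in\UF$ forces $A$ cofinite and hence $X\preceq Y$, and the truth value of $\vec x\leq\vec y$ is visibly independent of $\UF$. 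So the first step is to record these two translations and isolate the claim to prove, namely that $A$ is finite or cofinite; this simultaneously yields the full equivalence $X\preceq Y\iff\vec x\leq\vec y$ underlying the asserted isomorphy.

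Lemma \ref{lem:ordering-invariance} already brings us close, since it guarantees that one of $\{k:m_X(k)>m_Y(k)\}$, $\{k:m_X(k)<m_Y(k)\}$ is finite. What it does not exclude is that $m_X(k)\geq m_Y(k)$ holds eventually while equality $m_X(k)=m_Y(k)$ occurs on an infinite but co-infinite set, in which case $A$ is neither finite nor cofinite and $\vec x\leq\vec y$ genuinely depends on $\UF$. Ruling out this pathology is the point of the piecewise-polynomial hypothesis, and I would do it by upgrading Lemma \ref{lem:ordering-invariance} to the strict trichotomy: for $X,Y\in\F$ exactly one of $m_X(k)<m_Y(k)$, $m_X(k)>m_Y(k)$, $m_X(k)=m_Y(k)$ holds for all large $k$. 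Integrating $t^k$ term by term against the two continuous piecewise-polynomial densities on their compact supports shows that $g(k):=m_X(k)-m_Y(k)$ is a finite $\R$-linear combination of functions $k\mapsto d^{\,k}/(k+s)$, the bases $d>0$ ranging over the finitely many breakpoints and endpoints and $s\in\Z$. Unless all its coefficients cancel, such a $g$ is dominated for large $k$ by a single term of the form $c\,d^{\,k}/k^{\,m}$ with $c\neq0$, which fixes an eventually constant strict sign; if instead everything cancels then $g(k)=0$ for all $k$ and $F_1=F_2$ by Lemma \ref{lem:uniqueness}. In each of the three cases $A$ is finite or cofinite, which closes the argument.

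The genuinely delicate step is the claim that such a $g$ has eventually constant strict sign unless it vanishes identically. The subtlety is that the factors $1/(k+s)$ keep $g$ from being a pure exponential polynomial, so distinct terms sharing the same base $d$ but carrying different shifts $s$ must be combined, and the leading behaviour extracted by ordering first by base and then by the induced powers of $1/k$; making this extraction terminate --- which it does because only finitely many bases occur --- is the crux. By contrast, the ultrafilter bookkeeping (cofinite sets lie in every free $\UF$, finite sets in none) is routine. This analytic input is exactly what the piecewise-polynomial restriction provides, and it is what excludes the alternating-moment families whose order would otherwise depend on the choice of ultrafilter.
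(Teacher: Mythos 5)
Your argument is correct, and it actually proves more than the paper's own proof does. The paper's proof runs in one direction only: it invokes Lemma~\ref{lem:ordering-invariance} to get that (say) $m_X(k)\leq m_Y(k)$ for all $k\geq K$, observes that the agreement set is then cofinite and hence lies in every free ultrafilter, and concludes $\vec x\leq\vec y$ independently of $\UF$ --- i.e., it establishes $X\preceq Y\Rightarrow \vec x\leq\vec y$, whereas the theorem as worded asserts the converse. You correctly isolate why the converse is not automatic: Lemma~\ref{lem:ordering-invariance} leaves open the possibility that $m_X(k)\geq m_Y(k)$ holds eventually while equality occurs on an infinite, co-infinite set, in which case $A=\set{k: m_X(k)\leq m_Y(k)}$ is neither finite nor cofinite and the truth of $\vec x\leq\vec y$ genuinely depends on $\UF$. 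Your trichotomy --- expanding $m_X(k)-m_Y(k)$ as a finite combination of terms $d^k/(k+s)$ over the breakpoints of the piecewise-polynomial densities, extracting first the dominant base and then, within that base, the leading power of $1/k$ of the (rational) coefficient, and falling back on Lemma~\ref{lem:uniqueness} when everything cancels --- is precisely where the piecewise-polynomial hypothesis of Definition~\ref{asm:finity-of-repair-costs} does its work; the paper motivates that hypothesis in the discussion preceding Lemma~\ref{lem:uniqueness} (citing the alternating-moment families of \cite{burgin_remarks_2021}) but never actually deploys it in the proof of Theorem~\ref{thm:ordering-invariance}, whose appeal to Lemma~\ref{lem:ordering-invariance} alone cannot exclude the equality pathology. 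Two small additions would round your argument off: a sentence covering the discrete and categorical members of $\F$ (for finitely many atoms the same dominant-base argument applies to the exponential sum $\sum_j (p_j-q_j)c_j^k$), and the observation that a nonvanishing rational function $\sum_j a_j/(k+s_j)$ has only finitely many zeros, which is what guarantees the strict eventual sign within a fixed base. The ultrafilter bookkeeping itself is identical in both proofs.
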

\begin{proof}[cf. \cite{Rass2015a}]
Let $F_1, F_2$ be two probability distributions, and let $R_1\sim F_1,
R_2\sim F_2$. Lemma \ref{lem:ordering-invariance} assures the existence of
some $K\in\N$ so that $F_1\preceq F_2$ iff $m_{R_1}(k)\leq m_{R_2}(k)$
whenever $k\geq K$. Let $L$ be the set of indices where $m_{R_1}(k)\leq
m_{R_2}(k)$, then complement set $\N\setminus L$ is finite (it has at most
$K-1$ elements). Let $\UF$ be an arbitrary free ultrafilter. Since
$\N\setminus L$ is finite, it cannot be contained in $\UF$ as $\UF$ is free.
And since $\UF$ is an ultrafilter, it must contain the complement a set,
unless it contains the set itself. Hence, $L\in\UF$, which implies the claim.
\end{proof}
Theorem \ref{thm:ordering-invariance} has quite some useful implications:
first, the asserted independence of the ultrafilter $\UF$ spares us the need
to explicitly construct $\UF$ (note that the general question of whether or
	not non-isomorphic hyperreal fields would arise from different choices of
	ultrafilters is still unanswered by the time of writing this article). Second, the $\preceq$-ordering on $\F$ inherits all properties (e.g.,
transitivity) of the natural ordering $\leq$ on $^*\R$, which by the transfer
principle \cite{Robinson1966}, hold in the same way as for $\leq$ on $\R$.
More interestingly for further applications, topological properties of the
hyperreals can also be transferred to $\F$. This allows the definition of a
whole game theory on top of $\preceq$, as was started in \cite{Rass2015c}. It
must be noted, however, that the $\preceq$-ordering still behaves different
to $\leq$ on $\R$, since, for example, the equivalence-relation induced by
$\preceq$ does not entail an identity between distributions (since a finite
number of moments is allowed to mismatch in any case).

Interestingly, although not demanded in first place, the use of moments to
compare a distribution entails a similar fact as inequality
\eqref{eqn:usual-stochastic-order} upon which the usual stochastic order was
defined:
\begin{thm}\label{thm:tail-bounds}
Let $X,Y\in\F$ have the distributions $F_1, F_2$. If $X\preceq Y$, then there
exists a threshold $x_0\in\supp(F_1)\cup\supp(F_2)$ so that for every $x\geq
x_0$, we have $\Pr(X>x)\leq \Pr(Y>x)$.
\end{thm}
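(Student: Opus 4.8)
The plan is to translate the moment comparison into a comparison of tail functions and then exploit that high moments ``see'' only the very top of the support. Write $g(x):=\Pr(Y>x)-\Pr(X>x)$; since $X,Y\ge 1$ and both have compact support, $g$ vanishes outside $[1,b]$, where $b:=\max\{\sup\supp(F_1),\sup\supp(F_2)\}$. A layer-cake / integration-by-parts computation (using $X,Y\ge 1$, so each tail equals $1$ on $[0,1)$ and the constant contributions cancel) yields the identity
\[
 m_Y(k)-m_X(k)=k\int_1^{b}x^{k-1}g(x)\,dx ,
\]
so the hypothesis $X\preceq Y$ of Definition \ref{def:preference} says precisely that $\int_1^b x^{k-1}g(x)\,dx\ge 0$ for all $k\ge K$. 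Because both tails are zero for $x\ge b$, it suffices to prove $g(x)\ge 0$ on a left neighborhood of $b$, and then take $x_0$ to be a support point in that neighborhood.

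First I would dispose of the easy case using the classical fact $m_X(k)^{1/k}\to\sup\supp(F_1)$ (the $L^k\to L^\infty$ limit for bounded nonnegative variables). Since $m_X(k)\le m_Y(k)$ for large $k$, taking $k$-th roots and letting $k\to\infty$ gives $\sup\supp(F_1)\le\sup\supp(F_2)$. If this inequality is strict, put $x_0:=\sup\supp(F_1)\in\supp(F_1)$; then $\Pr(X>x)=0\le\Pr(Y>x)$ for every $x\ge x_0$ and the claim holds trivially.

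The substantive case is $b:=\sup\supp(F_1)=\sup\supp(F_2)$. Here the weight $x^{k-1}$ concentrates near $x=b$ as $k\to\infty$, so the sign of $\int_1^b x^{k-1}g(x)\,dx$ for large $k$ ought to be dictated by the sign of $g$ just below $b$. To make this rigorous I would invoke the piecewise-polynomial structure of Definition \ref{asm:finity-of-repair-costs}: on the topmost segment $[c,b]$ of the common refinement of the two partitions, each density is a polynomial, hence each tail is a polynomial and $g$ is a polynomial on $[c,b]$ with $g(b)=0$. If $g\equiv 0$ there, take $x_0=c$. Otherwise $g(x)=q(x)(x-b)^m$ near $b$ with $q(b)\neq 0$ and $m\ge 1$, so $g$ has a single, definite sign on some $(b-\delta,b)$. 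A Laplace/Watson-type estimate gives $\int_1^b x^{k-1}g(x)\,dx\sim \sigma\,C\,b^{k-1}k^{-(m+1)}$ with $C>0$ and $\sigma\in\{+1,-1\}$ the sign of $g$ on $(b-\delta,b)$; the assumed nonnegativity of the integral then forces $\sigma=+1$, i.e. $g\ge 0$ on $(b-\delta,b)$. Choosing $x_0$ to be a support point in this neighborhood completes the argument, and the discrete case is handled identically with $g$ a step function whose sign near $b$ is governed by the masses of the topmost atoms.

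The main obstacle is exactly the rigor of the claim that the integral's sign reflects the pointwise sign of $g$ near $b$. Without a structural hypothesis this can fail: one can construct densities whose moment differences oscillate in sign (the Burgin-type examples of \cite{burgin_remarks_2021}), so that no single threshold $x_0$ works. The piecewise-polynomial restriction is precisely what excludes this—it prevents $g$ from changing sign infinitely often as $x\uparrow b$, giving $g$ a constant sign on a one-sided neighborhood and rendering the concentration asymptotics decisive. I would therefore concentrate the effort on justifying the Laplace estimate and on the boundary bookkeeping ($g(b)=0$, matching breakpoints via the common refinement, and placing $x_0$ inside the support).
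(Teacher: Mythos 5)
Your proposal is correct in substance, and the gaps you flag (the Laplace estimate, the boundary bookkeeping) are the right ones and are fillable under Definition \ref{asm:finity-of-repair-costs}; but your route is genuinely different from the paper's. The paper works at the level of \emph{densities}: it takes $\xi$ to be the top of the common support, argues by contradiction --- reusing the moment-divergence machinery from the proof of Lemma \ref{lem:ordering-invariance} --- that $f_1\leq f_2$ must hold on some interval $[\xi-\delta,\xi]$, and then integrates that pointwise inequality to get the tail bound. You instead work at the level of \emph{survival functions}: the layer-cake identity $m_Y(k)-m_X(k)=k\int_1^b x^{k-1}g(x)\,dx$ with $g=\Pr(Y>\cdot)-\Pr(X>\cdot)$, followed by a concentration estimate that reads off the sign of $g$ just below $b$. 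Your decomposition buys three things: it treats the continuous and discrete cases uniformly (the identity holds for any bounded nonnegative variables); it targets exactly the quantity $g$ appearing in the theorem's conclusion, so no intermediate density comparison is needed; and it isolates explicitly the one place where the piecewise-polynomial hypothesis is indispensable, namely ruling out infinitely many sign changes of $g$ as $x\uparrow b$ (the Burgin-type pathology). The paper's proof is shorter because it delegates to Lemma \ref{lem:ordering-invariance}, but its contradiction step --- passing from the failure of ``$f_1>f_2$ on every $[\xi-\delta,\xi]$'' to ``$f_1\leq f_2$ on \emph{some} such interval'' --- silently relies on the same sign-definiteness near the top of the support that you make explicit. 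Your separate disposal of the case $\sup\supp(F_1)<\sup\supp(F_2)$ via $m_X(k)^{1/k}\to\sup\supp(F_1)$ is likewise a clean piece of bookkeeping that the paper handles only informally in Section \ref{sec:practicalities}.
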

The proof of this appears in the appendix. Intuitively, Theorem
\ref{thm:tail-bounds} can be rephrased into saying that:
\begin{quote}
If $F_1\preceq F_2$, then ``extreme events'' are less likely to occur under
$F_1$ than under $F_2$.
\end{quote}
Summarizing the results obtained, we can say that the $\preceq$-ordering
somewhat resembles the initial definition of the usual stochastic order
$\leq_{st}$, up to the change of restricting the range from
$(-\infty,\infty)$ to a subset of $[1,\infty)$ and in allowing a finite
number of moments to behave arbitrarily. Although this allows for an explicit
disregard of the first few moments, the overall effect of choosing a
$\preceq$-minimal distribution is shifting all the probability mass towards
regions of lower damages, which is a consequence of Theorem
\ref{thm:tail-bounds}. As such, this result could by itself be taken as a
justification to define this ordering in first place. However, in the way
developed here, the construction roots in moments and their recognized
relation to risk attitudes \cite{Eichner&Wagener2011,Chiu2010,Wenner2002},
and in the end aligns itself to both, the intuition behind $\leq_{st}$ and
the focus of risk management on extreme events, without ever having stated
this as a requirement to begin with. Still, by converting Theorem
\ref{thm:tail-bounds} into a definition, we could technically drop the
assumption of losses being $\geq 1$. We leave this as an aisle for future
research. As a justification of the restrictions as stated, note that most
risk management in the IT domain is based on categorical terms (see
\cite{Bundestag2010,BundesamtBevoelkerungsschutz2013,NANS2011,SCCA2012,Mell2007}),
which naturally map into integer ranks $\geq 1$. Thus, our assumption seems
mild, at least for IT risk management applications (applications in other
contexts like insurance \cite{Hogg&Klugman1984} are not discussed here and
constitute a possible reason for dropping the lower bound in future work).

\subsection{Distributions with Unbounded Tails}\label{sec:unbounded-tails}
Theorem \ref{thm:tail-bounds} tells that distributions with thin tails would
 be preferred over those with fat tails. However, catastrophic events are usually modeled by distributions with fat, heavy or
long tails. The boundedness condition in definition \ref{asm:finity-of-repair-costs} rules out
many such distributions relevant to risk management (e.g., financial risk
management \cite{Baeuerle2006}).
Thus, our next step is extending the ordering by
relaxing some of the assumptions that characterize $\F$.

The $\preceq$-relation
cannot be extended to cover distributions with heavy tails, as those typically do not have finite moments
or moment generating functions. For example, L\'{e}vi's $\alpha$-stable
distributions \cite{Nolan2016} are not analytically expressible as densities
or distribution functions, so the expression $\E{\phi(X)}$ could be quite
difficult to work out for the usual stochastic order. Conversely, resorting
to moments, we can work with characteristic functions, which can be much more
feasible in practice.

Nevertheless, such distributions are important tools in risk management.
Things are, however, not drastically restricted, for at least two reasons:

\begin{enumerate}
  \item Compactness of the support is not necessary for all moments to
      exist, as the Gaussian distribution has moments of all orders and is
      supported on the entire real line (thus violating even two of the
      three conditions of assumption \ref{asm:finity-of-repair-costs}).
      Still, it is characterized entirely by its first two moments, and
      thus can easily be compared in terms of the $\preceq$-relation.
  \item Any distribution with infinite support can be approximated by a
      truncated distribution.~\todo{the explanations about truncated
      distributions have been shortened, as this is standard and a compact
      reminder may suffice to follow the text} Given a random variable $X$
      with distribution function $F$, then \emph{truncated distribution}
      $\hat F$ is the conditional likelihood $\hat F(x) = \Pr(X\leq x|a\leq
      X\leq b)$.



By construction, the truncated distribution has the compact support
$[a,b]$. More importantly, for a loss distribution with unbounded support
$[1,\infty)$ and given any $\eps>0$, it is easy to choose a compact
interval $[a,b]$ large enough inside $[1,\infty)$ so that $\abs{F(x)-\hat
F(x)}<\eps$ for all $x$.
Hence, restricting ourselves to distributions with compact support, i.e.,
adopting assumption \ref{asm:finity-of-repair-costs}, causes no more than a
numerical error that can be made as small as we wish.
\end{enumerate}

More interestingly, we could attempt to play the same trick as before, and
characterize a distribution with fat, heavy or long tails by a sequence of
approximations to it, arising from better and better accuracy $\eps\To 0$. In
that sense, we could hope to compare approximations rather than the true
density in an attempt to extend the preference and equivalence relations
$\preceq$ and $\equiv$ to distributions with fat, heavy or long tails.

Unfortunately, such hope is an illusion, as a distribution is not uniquely
characterized by a general sequence of approximations (i.e., we cannot
formulate an equivalent to lemma \ref{lem:uniqueness}), and the outcome of a
comparison of approximations is not invariant to how the approximations are
chosen (i.e., there is also no alike for lemma
\ref{lem:ordering-invariance}). To see the latter, take the quantile function
$F^{-1}(\alpha)$ for a distribution $F$, and consider the tail quantiles
$\overline{F}^{-1}(\alpha) = F^{-1}(1-\alpha)$. Pick any sequence
$(\alpha_n)_{n\To\infty}$ with $\alpha_n\To 0$. Since
$\lim_{x\To\infty}F(x)=1$, the tail quantile sequence behaves like
$\overline{F}^{-1}(\alpha_n)\To \infty$, where the limit is independent of
the particular sequence $(\alpha_n)_{n\To\infty}$, but only the speed of
divergence is different for distinct sequences.

Now, let two distributions $F_1, F_2$ with infinite support be given.  Fix
two sequences $\alpha_n$ and $\omega_n$, both vanishing as $n\To\infty$, and
set
\begin{equation}\label{eqn:support-sequences}
  a_n := \overline{F}_1^{-1}(\alpha_n) \leq  b_n := \overline{F}_2^{-1}(\omega_n).
\end{equation}
Let us approximate $F_1$ by a sequences of truncated distributions $\hat
f_{1,n}$ with supports $[1,a_n]$ and let the sequence $\hat f_{2,n}$
approximate $f_2$ on $[1,b_n]$. Since $a_n< b_n$ for all $n$, it is easily
verified that the sequence of moments of the distributions truncated to
$[1,a_n]$ and $[1,b_n]$ implies that the respective moment sequences diverge
so that $\hat f_{1,n}\preceq \hat f_{2,n}$ ultimately. However, by replacing
the ``$< $'' by a ``$>$'' in \eqref{eqn:support-sequences}, we can construct
approximations to $F_1, F_2$ whose truncated supports overlap one another in
the reverse way, so that the approximations would always satisfy $\hat
f_{1,n}\succeq \hat f_{2,n}$. It follows that the sequence of approximations
\emph{cannot} be used to unambiguously compare distributions with infinite
support, unless we impose some constraints on the tails of the distributions
and the approximations. The next lemma (see appendix \ref{apx:criterion} for
a proof) assumes this situation to simply not occur, which allows to give a
\emph{sufficient} condition to unambiguously extend strict preference in the
way we wish.

\begin{lem}\label{lem:approximation-comparison}
Let $F_1,F_2$ be two distributions supported on $[1,\infty)$ with continuous
densities $f_1,f_2$. Let $(a_n)_{n\in\N}$ be an arbitrary sequence with
$a_n\To\infty$ as $n\to\infty$, and let $\hat f_{i,n}$ for $i=1,2$ be the
truncated distribution $f_i$ supported on $[1,a_n]$.

If there is a constant $c<1$ and a value $x_0\in\R$ such that $f_1(x)<c\cdot
f_2(x)$ for all $x\geq x_0$, then there is a number $N$ such that all
approximations $\hat f_{1,n},\hat f_{2,n}$ satisfy $\hat f_{1,n}\prec \hat
f_{2,n}$ whenever $n\geq N$.
\end{lem}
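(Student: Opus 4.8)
The plan is to work directly from Definition \ref{def:preference}: for a fixed $n$, establishing $\hat f_{1,n}\prec\hat f_{2,n}$ amounts to showing that $m_{\hat f_{1,n}}(k)<m_{\hat f_{2,n}}(k)$ holds for all sufficiently large $k$, the threshold $K$ being allowed to depend on $n$. Writing $Z_{i,n}:=\int_1^{a_n}f_i(x)\,dx$ for the truncation constants, the truncated densities are $\hat f_{i,n}(x)=f_i(x)/Z_{i,n}$ on $[1,a_n]$, and since the supports are $[1,\infty)$ we have $Z_{i,n}=F_i(a_n)\To 1$ as $n\To\infty$. After clearing denominators, the target moment inequality becomes
\[
  Z_{2,n}\int_1^{a_n}x^k f_1(x)\,dx < Z_{1,n}\int_1^{a_n}x^k f_2(x)\,dx .
\]
A useful preliminary observation is that, since $f_1\geq 0$, the hypothesis $f_1(x)<c\,f_2(x)$ forces $f_2(x)>0$ for every $x\geq x_0$; this strict positivity is what will eventually drive the estimate. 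Without loss of generality I would take $x_0\geq 1$ (if $x_0\leq 1$ the domination holds on all of $[1,a_n]$ and the argument only simplifies).

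First I would split the $f_1$-integral at $x_0$. On $[1,x_0]$ the factor $x^k$ is at most $x_0^k$, so that part is bounded by $x_0^k F_1(x_0)\leq x_0^k$; on $[x_0,a_n]$ the hypothesis gives $\int_{x_0}^{a_n}x^k f_1\,dx \leq c\int_{x_0}^{a_n}x^k f_2\,dx\leq c\int_1^{a_n}x^k f_2\,dx$. Hence the target reduces to the sufficient condition
\[
  Z_{2,n}\,x_0^k\,F_1(x_0) < (Z_{1,n}-c\,Z_{2,n})\int_1^{a_n}x^k f_2(x)\,dx .
\]
The factor $Z_{1,n}-c\,Z_{2,n}$ on the right tends to $1-c>0$, so it is bounded below by a positive constant once $n$ is large. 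The next step is to bound the $f_2$-integral from below by something growing strictly faster than $x_0^k$. Choosing $N$ so that $a_n>x_0+1$ for all $n\geq N$, I fix an interval $[x_1,x_1+\eta]$ with $x_0<x_1$ and $x_1+\eta<x_0+1$; by continuity and positivity of $f_2$ there, $f_2\geq\mu>0$ on this ($n$-independent) interval, whence $\int_1^{a_n}x^k f_2\,dx\geq \mu\eta\,x_1^{\,k}$. Since $x_1>x_0$, the ratio $(x_1/x_0)^k\To\infty$, and the reduced inequality holds for all $k$ beyond some threshold, giving the strict moment inequality and thus $\hat f_{1,n}\prec\hat f_{2,n}$.

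Finally, collecting the constraints on $N$ (namely $a_n>x_0+1$ and $Z_{1,n}-c\,Z_{2,n}\geq(1-c)/2$) yields a single $N$ valid for all $n\geq N$; in fact, since none of $\mu,\eta,x_1$, the lower bound on $Z_{1,n}-c\,Z_{2,n}$, nor the bound $F_1(x_0)\leq 1$ depends on $n$, one even obtains a threshold $K$ uniform in $n$. I expect the one genuinely delicate point to be the contribution of $f_1$ on the \emph{bulk} region $[1,x_0]$, where $f_1$ is entirely uncontrolled by the hypothesis: the whole argument rests on the fact that this contribution grows only like $x_0^k$, while the strict positivity of $f_2$ above $x_0$ makes the competing $f_2$-moment grow like $x_1^{\,k}$ with $x_1>x_0$, so the bulk is ultimately swamped. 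Checking that this domination is \emph{uniform} in $n$ — which is exactly what the convergence $Z_{i,n}\To 1$ and the fixed sub-interval deliver — is the crux.
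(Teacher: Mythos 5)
Your proof is correct and follows essentially the same route as the paper's: the convergence of the normalization constants $F_i(a_n)\to 1$ supplies the slack coming from $c<1$, and the conclusion is forced by the moment-growth comparison in which the uncontrolled bulk of $f_1$ on $[1,x_0]$ contributes only $O(x_0^k)$ while the strictly positive $f_2$-mass just above $x_0$ contributes at least $\mu\eta\, x_1^{\,k}$ with $x_1>x_0$ --- which is precisely the argument of Lemma \ref{lem:ordering-invariance} that the paper invokes by reference at this point. The only difference is presentational: you carry out that final step explicitly (and note the resulting uniformity of the moment threshold $K$ in $n$), whereas the paper first derives the pointwise inequality $\hat f_{1,n}(x)<\hat f_{2,n}(x)$ for $x\geq x_0$ and then delegates the moment comparison to the earlier proof.
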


By virtue of lemma \ref{lem:approximation-comparison}, we can extend the
strict preference relation to distributions that satisfy the hypothesis of
the lemma but need not have compact support anymore. Precisely, we would
strictly prefer one distribution over the other, if all truncated
approximations are ultimately preferable over one another.

\begin{defn}[Extended Preference Relation
$\prec$]\label{def:extended-strict-preference} Let $F_1, F_2$ be distribution
functions of nonnegative random variables that have infinite support and
continuous density functions $f_1, f_2$. We \emph{(strictly) prefer} $F_1$
\emph{over} $F_2$, denoted as $F_1\prec F_2$, if for every sequence
$a_n\to\infty$ there is an index $N$ so that the approximations $\hat
F_{i,n}$ for $i=1,2$ satisfy $\hat F_{1,n}\prec \hat F_{2,n}$ whenever $n\geq
N$.

The $\succ$-relation is defined alike, i.e., the ultimate preference of $F_2$
over $F_1$ on any sequence of approximations.
\end{defn}

Definition \ref{def:extended-strict-preference} is motivated by the above
arguments on comparability on common supports, and lemma
\ref{lem:approximation-comparison} provides us with a handy criterion to
decide the extended strict preference relation.

\begin{exa}\label{exa:evd-comparison} It is a matter of simple algebra to
verify that any two out of the three kinds of extreme value distributions
(Gumbel, Frechet, Weibull) satisfy the above condition, thus are strictly
preferable over one another, depending on their
particular parametrization. 
\end{exa}

Definition \ref{def:extended-strict-preference} can, however, not applied to
every pair of distributions, as the following example shows.
\begin{exa}
Take the ``Poisson-like'' distributions with parameter $\lambda>0$,
\[
    f_1(k) \propto \left\{
               \begin{array}{ll}
                 \frac{\lambda^{k/2}}{(k/2)!}e^{-\lambda}, & \hbox{when $k$ is even;} \\
                 0, & \hbox{otherwise.}
               \end{array}
             \right.,\quad
    f_2(k) \propto \left\{
               \begin{array}{ll}
                 0, & \hbox{when $k$ is even};\\
\frac{\lambda^{(k-1)/2}}{((k-1)/2)!}e^{-\lambda}, & \hbox{otherwise} \\
               \end{array}
            \right.
\]
It is easy to see that no constant $c<1$ can ever make $f_1<c\cdot f_2$ and
that all moments exist. However, neither distribution is preferable over the
other, since finite truncations to $[1,a_n]$ based on the sequence $a_n:=n$
will yield alternatingly preferable results.
\end{exa}

An occasionally simpler condition that implies the hypothesis of definition
\ref{def:extended-strict-preference} is
\begin{equation}\label{eqn:strict-preference-alternative-criterion}
  \lim_{x\To\infty}\frac{f_1(x)}{f_2(x)} = 0.
\end{equation}
The reason is simple: if the condition of definition
\ref{def:extended-strict-preference} were violated, then there is an infinite
sequence $(x_n)_{n\in\N}$ for which $f_1(x_n)\geq c\cdot f_2(x_n)$ for all
$c<1$. In that case, there is a subsequence $(x_{n_k})_{k\in\N}$ for which
$\lim_{k\To\infty} f_1(x_{n_k})/f_2(x_{n_k})\geq c$. Letting $c\To 1$, we can
construct a further subsequence of $(x_{n_k})_{k\in\N}$ to exhibit that
$\limsup_{n\To\infty}(f_1(x_n)/f_2(x_n))=1$, so that condition
\eqref{eqn:strict-preference-alternative-criterion} would be refuted. Observe
that \eqref{eqn:strict-preference-alternative-criterion} is similar to the
definition of a likelihood ratio order \cite{Shaked2006} in the sense that it
implies both, a likelihood ratio and $\prec$-ordering. Note that, however, a
likelihood ratio order does not necessarily imply a $\prec$-order, since the
former only demands $f(t)/g(t)$ to be increasing, but not a $<$-relation
among the densities.

\begin{rem}\label{rem:strict-preference-not-extensible}
It must be emphasized that the above line of arguments does not provide us
with a mean to extend the $\preceq$- or $\equiv$-relations accordingly. For
example, an attempt to define $\preceq$ and $\equiv$ as above is obviously
doomed to failure, as asking for two densities $f_1, f_2$ to satisfy
$f_1(x)\leq c_1\cdot f_2(x)$ ultimately (note the intentional relaxation of
$<$ towards $\leq$), and $f_2(x)\leq c_2\cdot f_1(x)$ ultimately for two
constants $c_1,c_2<1$ is nonsense.
\end{rem}

A straightforward extension of $\preceq$ can be derived from (based on) the
conclusion of lemma \ref{lem:approximation-comparison}:
\begin{defn}\label{def:extended-preference} Let $F_1,F_2$ be two
distributions supported on the entire nonnegative real half-line $\R^+$ with
continuous densities $f_1,f_2$. Let $(a_n)_{n\in\N}$ be a diverging sequence
towards $\infty$, and let $\hat F_{i,n}$ for $i=1,2$ denote the density $F_i$
truncated to have support $[1,a_n]$. We define $F_1\preceq F_2$ if and only
if for every sequence $(a_n)_{n\in\N}$ there is some index $N$ so that $\hat
F_{1,n}\preceq \hat F_{2,n}$ for every $n\geq N$.
\end{defn}

More compactly and informally spoken, definition
\ref{def:extended-preference} demands preference on all approximations with
finite support except for at most finitely many exceptions near the origin.

Obviously, preference among distributions with finite support implies the
extended preference relation to hold in exactly the same way (since the
sequence of approximations will ultimately become constant when $a_n$
overshoots the bound of the support), so definition
\ref{def:extended-preference} extends the $\preceq$-relation in this sense.

\subsection{Comparing Distributions of Mixed Type}
The representation of a distribution by the sequence of its moments is of the
same form, for discrete, categorical and continuous random variables. Hence,
working with sequence representations (hyperreal numbers) admits to compare
continuous to discrete and categorical variables, as long as there is a
meaningful common support. The framework itself, up to the results stated so
far, remains unchanged and is applied to the category's ranks instead. The
ranking is then made in ascending order of loss severity, i.e., the category
with lowest rank (index) should be the one with the smallest damage magnitude
(examples are found in IT risk management standards like ISO 27005 \cite{ISO2011}
or the more generic ISO 31000 \cite{ISO2009} as well as related standards).

A comparison of mixed types is, obviously, only meaningful if the respective
random variables live in the same (metric) space. For example, it would be
meaningless to compare ordinal to numeric data. Some applications in natural
risk management define categories as numeric ranges (such as
\cite{Bundestag2010,BundesamtBevoelkerungsschutz2013,NANS2011,SCCA2012}),
which \emph{could} make a comparison of categories and numbers meaningful
(but not necessarily so).

\section{Practicalities}\label{sec:practicalities} It must be noted that
Definition \ref{def:preference} demands only the existence of some index
after which the sequence of moment diverges, without giving any condition to
assure this. Likewise, Theorem \ref{thm:tail-bounds} is non-constructive in
asserting the existence of a region onto which the $\preceq$-smaller
distribution puts more mass than the other. Hence, practical matters of
deciding and interpreting the $\preceq$-ordering are necessary and discussed
in the following.

In general, if the two distributions are supported on the sets $[1,a]$ for
$F_1$ and $[1,b]$ for $F_2$ with $b>a$, then the mass that $F_2$ puts on the
set $(a,b]$ will cause the moments of $F_2$ to grow faster than those of
$F_1$. In that case, we can thus immediately conclude $F_1\preceq
F_2$, and we get $x_0=a$ in Theorem \ref{thm:tail-bounds}. Thus, the more
interesting situation arises when the supports are identical, which is
assumed throughout the following subsections. Observe that it is herein not necessary to look
	at overlaps at the lower end of the supports, since the mass assigned near
	the ``right end'' of the support is what determines the growth of the moment
	sequence; the proof of Lemma \ref{lem:ordering-invariance} in the appendix
	more rigorously shows this.

\subsection{Deciding $\preceq$ between Categorical
Variables}\label{sec:preference-between-categorical-distr} Let $F_1, F_2$ be
two distributions over a common support, i.e., a common finite set of
categories, hereafter denoted in \emph{descending} order as $c_1>c_2>\ldots
>c_n$. Let $\hat f_1=(p_1,\ldots,p_n), \hat f_2=(q_1,\ldots, q_n)$ be the
corresponding probability mass functions. For example, these can be
normalized histograms (empirical density functions) computed from the
available data to approximate the unknown distributions $F_1, F_2$ of the
random variables $X,Y$.

Letting the category $c_i$ correspond to its rank $n-i+1$ within the support,
it is easy to check that the expectation of $X\sim F_1, Y\sim F_2$ by
definition is a sequence whose growth is determined by whichever distribution
puts more mass on categories of high loss. Formally, if $p_1>q_1$, then
$\E{X^k}=\sum_{j=1}^n p_j c_j^k
> \sum_{j=1}^n q_j c_j^k=\E{Y^k}$, since the growth of either sum is
determined by the largest term (here being $c_1^k$). Upon the equality
$p_1=q_1$, we can retract the respective terms from both sums (as they are
equal), to see whether the second-largest term $c_2^k$ tips the scale, and so
on.

Overall, we end up observing that $\preceq$-comparing distributions is quite
simple, and a special case of another common ordering relation:
\begin{defn}[lexicographic ordering]
For two real-valued vectors $\vec x = (x_1,x_2,\ldots)$ and $\vec
y=(y_1,y_2,\ldots)$ of not necessarily the same length, we define $\vec
x<_{lex} \vec y$ if and only if there is an index $i_0$ so that
$x_{i_0}<y_{i_0}$ and $x_i=y_i$ whenever $i<i_0$.
\end{defn}

Our discussion from above is then the mere insight that the following is
true:
\begin{thm}\label{thm:lexicographic-order}
Let $F_1, F_2$ be two categorical random variables with a common ordered
support $\Omega=\set{c_1> c_2>\ldots > c_n}$, and let $\vec f_1, \vec f_2$ be
the respective (empirical) density functions. Then $F_1\preceq F_2\iff \vec
f_1<_{lex} \vec f_2$, where $\vec f_i = (f_i(c_1), f_i(c_2), \ldots,
f_i(c_n))\in\R^n$.
\end{thm}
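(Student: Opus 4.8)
The plan is to reduce the statement to the eventual sign of the difference of the two moment sequences, which for categorical variables is an explicit finite exponential sum. Writing $p_j := f_1(c_j)$ and $q_j := f_2(c_j)$, the $k$-th moments are $m_{R_1}(k) = \sum_{j=1}^n p_j c_j^k$ and $m_{R_2}(k) = \sum_{j=1}^n q_j c_j^k$, so the quantity that governs Definition \ref{def:preference} is
\[
    D(k) := m_{R_2}(k) - m_{R_1}(k) = \sum_{j=1}^n d_j\, c_j^k, \qquad d_j := q_j - p_j .
\]
Since $F_1\preceq F_2$ holds exactly when $m_{R_1}(k)\leq m_{R_2}(k)$ for all large $k$, the whole theorem amounts to reading off the eventual sign of $D(k)$ from the coordinate-difference vector $(d_1,\ldots,d_n)$.

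First I would set aside the degenerate case $\vec f_1=\vec f_2$ (all $d_j=0$), where $D(k)\equiv 0$, so that $F_1\equiv F_2$ and the two relations coincide in their equality parts; this is also the one place where the strict $<_{lex}$ in the statement must be read against the reflexive $\preceq$, and I would flag that mismatch explicitly. Otherwise let $i_0$ be the smallest index with $d_{i_0}\neq 0$. Because the support is listed in \emph{descending} order $c_1>c_2>\cdots>c_n$, the factor $c_{i_0}^k$ dominates every later term, and pulling it out gives
\[
    \frac{D(k)}{c_{i_0}^k} = d_{i_0} + \sum_{j=i_0+1}^n d_j\left(\frac{c_j}{c_{i_0}}\right)^{\!k}.
\]
Each ratio satisfies $c_j/c_{i_0}<1$ for $j>i_0$, so the sum vanishes and $D(k)/c_{i_0}^k\To d_{i_0}$ as $k\To\infty$; as $c_{i_0}^k>0$, the sign of $D(k)$ agrees with the sign of $d_{i_0}$ for all sufficiently large $k$.

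The theorem then follows by unwinding the definitions. The relation $\vec f_1<_{lex}\vec f_2$ says precisely that the first nonvanishing coordinate difference is positive, i.e.\ $d_{i_0}>0$, which by the above yields $D(k)>0$ eventually and hence $F_1\preceq F_2$ (indeed $F_1\prec F_2$); conversely $d_{i_0}<0$ forces $D(k)<0$ eventually, giving $F_2\prec F_1$ and excluding $F_1\preceq F_2$. I expect no analytic difficulty here: the dominance estimate is immediate from the ordering of the support, and the real care is bookkeeping, namely confirming that the informal ``retract the equal leading terms'' procedure sketched just before the theorem is exactly the inductive selection of the index $i_0$, and that the strict lexicographic relation is matched correctly against the reflexive preference relation in the boundary case.
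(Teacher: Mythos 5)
Your proof is correct and formalizes exactly the argument the paper itself gives informally just before the theorem (the ``retract the equal leading terms'' discussion): the moment difference is a finite exponential sum $\sum_j d_j c_j^k$ whose eventual sign is that of $d_{i_0}$ at the first index where the densities differ, since $c_{i_0}^k$ dominates all later terms. Your added remark that the stated equivalence degenerates when $\vec f_1=\vec f_2$ (where $F_1\preceq F_2$ holds reflexively but $\vec f_1<_{lex}\vec f_2$ fails) is a correct observation about an edge case the paper glosses over.
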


For illustration, we will apply Theorem \ref{thm:lexicographic-order} to two
concrete example data sets \#1 and \#2 in section
\ref{sec:practical-examples}.

\subsection{Deciding $\preceq$ between Continuous Variables}
Let us assume that the two random variables $R_1\sim F_1, R_2\sim F_2$ have
smooth densities $f_1,f_2\in C^{\infty}([1,a])$ for some $a>1$. Under this
assumption, we can switch to yet another useful sequence representation:
\begin{equation}\label{eqn:sequence-representation}
    f \mapsto \vec f = \left((-1)^k \hat f^{(k)}(a)\right)_{k\in\N}.
\end{equation}
Given two distributions $f_1, f_2\in C^\infty$, e.g., constructed from a
Gaussian kernel (cf. remark \ref{rem:convolution} below), let the respective
representations according to \eqref{eqn:sequence-representation} be $\vec
f_1, \vec f_2$. Then, it turns out that the lexicographic ordering of $\vec
f_1, \vec f_2$ implies the same ordering w.r.t. $\preceq$, or formally:
\begin{lem}[\!\!{\cite{Rass2015b}}]\label{lem:derivative-comparisons}
Let $f,g\in C^\infty([1,a])$ for a real value $a>1$ be probability density
functions. If
\[
    ((-1)^k\cdot f^{(k)}(a))_{k\in\N}<_{lex} ((-1)^k\cdot g^{(k)}(a))_{k\in\N},
\]
then $f\preceq g$.
\end{lem}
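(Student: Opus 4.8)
The plan is to show that the hypothesis forces $m_g(k)\geq m_f(k)$ for all sufficiently large $k$, which is exactly the defining condition for $f\preceq g$ in Definition \ref{def:preference} (in fact we will obtain the strict inequality, hence $f\prec g$). The guiding principle is that, for large $k$, the moment $m_f(k)=\int_1^a x^k f(x)\,dx$ is dominated by the behaviour of $f$ near the right endpoint $x=a$, because $x^k$ concentrates there; and the successive derivatives $f^{(j)}(a)$ are precisely the data that control this endpoint concentration. Writing $h:=g-f\in C^\infty([1,a])$, it therefore suffices to prove that $\int_1^a x^k h(x)\,dx>0$ once $k$ is large enough.

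First I would turn the moment into a Laplace-type integral. Substituting $x=a e^{-t}$ maps $[1,a]$ onto $t\in[0,\ln a]$ and gives
\[
\int_1^a x^k h(x)\,dx = a^{k+1}\int_0^{\ln a} e^{-(k+1)t}\,\phi(t)\,dt,\qquad \phi(t):=h(a e^{-t})\in C^\infty([0,\ln a]).
\]
The asymptotics of such an integral as $k\To\infty$ are furnished by Watson's lemma: they are governed entirely by the Taylor expansion of $\phi$ at $t=0$, with the finite upper limit $\ln a$ contributing only an exponentially small correction. Concretely, $\int_0^{\ln a} e^{-(k+1)t}\phi(t)\,dt\sim\sum_{m\geq 0}\phi^{(m)}(0)/(k+1)^{m+1}$.

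The second step is to read off the coefficients $\phi^{(m)}(0)$ in terms of the endpoint derivatives of $h$. By the chain rule (Fa\`a di Bruno), $\phi^{(m)}(0)=(-1)^m a^m h^{(m)}(a)+\sum_{j<m}c_{m,j}\,h^{(j)}(a)$ for constants $c_{m,j}$ depending only on $a$. Now the hypothesis $((-1)^k f^{(k)}(a))_k<_{lex}((-1)^k g^{(k)}(a))_k$ means there is a first index $j_0$ with $f^{(j)}(a)=g^{(j)}(a)$ for all $j<j_0$ (equivalently $h^{(j)}(a)=0$ there) and $(-1)^{j_0}\big(g^{(j_0)}(a)-f^{(j_0)}(a)\big)>0$, i.e.\ $(-1)^{j_0}h^{(j_0)}(a)>0$. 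These relations annihilate the first $j_0$ Watson coefficients and leave $\phi^{(j_0)}(0)=(-1)^{j_0}a^{j_0}h^{(j_0)}(a)>0$, since $a^{j_0}>0$. Hence
\[
m_g(k)-m_f(k)=a^{k+1}\int_0^{\ln a} e^{-(k+1)t}\phi(t)\,dt \sim a^{k+1}\,\frac{(-1)^{j_0}a^{j_0}h^{(j_0)}(a)}{(k+1)^{j_0+1}}>0,
\]
so the difference is eventually positive and $f\preceq g$ follows.

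I expect the main obstacle to be the rigorous control of the asymptotic expansion, rather than its formal derivation. Two points need care: that the contribution of the integral away from $t=0$ (equivalently, of $f$ and $g$ away from $x=a$) is genuinely negligible compared to the endpoint term of order $k^{-(j_0+1)}$, and that the remainder beyond the leading Watson term is $o(k^{-(j_0+1)})$, so that the sign of the leading coefficient indeed dictates the sign of $m_g(k)-m_f(k)$ for large $k$. Since $f,g$ are only $C^\infty$ on the compact interval $[1,a]$, their Taylor series at $a$ need not converge, so I would avoid an infinite expansion; a finite Taylor expansion of $\phi$ to order $j_0$ with integral remainder, split into a shrinking neighbourhood of $t=0$ and its complement, suffices to extract the leading term together with a provably smaller error. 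The chain-rule bookkeeping showing that the first $j_0$ coefficients vanish is then the only remaining routine computation.
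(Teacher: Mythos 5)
Your proof is correct, but note that the paper itself does not prove this lemma (it defers to the cited reference), and the argument most natural within the paper's own machinery is different from yours. That route observes that the lexicographic hypothesis is exactly the statement that $h:=g-f$ has $h^{(j)}(a)=0$ for $j<j_0$ and $(-1)^{j_0}h^{(j_0)}(a)>0$, so a finite Taylor expansion at $a$ gives $h(x)=\frac{(-1)^{j_0}h^{(j_0)}(a)}{j_0!}(a-x)^{j_0}+o\bigl((a-x)^{j_0}\bigr)>0$ on some left neighbourhood $(a-\delta,a)$ of the right endpoint; one then invokes the endpoint-domination argument already established in the proof of Lemma \ref{lem:ordering-invariance} (the piecewise bound $\ell(k,x)$ with $-\lambda_1 x^k$ below $a'$ and $+\lambda_2 x^k$ on a compact subinterval of $(a-\delta,a)$) to conclude that $\int_1^a x^k h(x)\,dx\To+\infty$. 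Your Watson's-lemma argument replaces that second step by a direct Laplace asymptotic after the substitution $x=ae^{-t}$, and it buys something the paper's route does not: the precise rate $m_g(k)-m_f(k)\sim (-1)^{j_0}a^{k+j_0+1}h^{(j_0)}(a)/(k+1)^{j_0+1}$, identifying exactly which derivative tips the scale and how fast. The cost is that you must carry out the remainder control yourself (finite Taylor with integral remainder, exponentially small tail from the finite upper limit $\ln a$), which you correctly identify and sketch; the Fa\`a di Bruno bookkeeping showing $\phi^{(m)}(0)=(-1)^m a^m h^{(m)}(a)+\sum_{j<m}c_{m,j}h^{(j)}(a)$ is right and does annihilate the first $j_0$ coefficients. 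Both arguments are sound; yours is self-contained and quantitative, the paper's is more elementary and reuses its existing lemma.
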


Lemma \ref{lem:derivative-comparisons} will be demonstrated on our example
data set \#3, in connection with a kernel density estimate, in section
\ref{sec:practical-examples}. Practically, we can thus decide the
$\preceq$-relation by numerically computing derivatives of increasing order,
until the decision is made by the lexicographic ordering (which, for our
experiments, happened already at zeroth order in many cases).

\begin{rem}\label{rem:convolution}
The assumption on differentiability is indeed mild, as we can cast any
integrable density function into a $C^\infty$-function by convolution with a
Gaussian density $k_h$ with zero mean and variance $h$. Clearly, $f\ast
k_h\in C^\infty$ by the differentiation theorem of convolution. Moreover,
letting $h\to 0$, we even have $L^1$-convergence of $f\ast k_h\to f$, so that
the approximation can be made arbitrarily accurate by choosing the parameter
$h>0$ sufficiently small. Practically, when the distributions are constructed
from empirical data, the convolution corresponds to a kernel density
estimation (i.e., a standard nonparametric distribution model). Using a
Gaussian kernel then has the additional appeal of admitting a closed form of
the $k$-th derivatives $(f\ast k_h)^{(k)}$, involving Hermite-polynomials.
\end{rem}

\paragraph{Observation -- ``$\preceq~\approx~<_{lex}$'':} As an intermediate
r\'{e}sum\'{e}, the following can be said:

\begin{quote}
Under a ``proper'' representation of the distribution (histogram or
continuous kernel density estimate), the $\preceq$-order can be decided as
a humble lexicographic order.
\end{quote}

This greatly simplifies matters of practically working with
$\preceq$-preferences, and also fits into the intuitive understanding of risk
and its formal capture by theorem \ref{thm:tail-bounds}: \emph{whichever
distribution puts more mass on far-out regions is less favourable under
$\preceq$.}

\subsection{Comparing Deterministic to Random Effects}
In certain occasions, the consequence of an action may result in perfectly
foreseeable effects, such as fines or similar. Such deterministic outcomes
can be modeled as degenerate distributions (point- or
Dirac-masses). These are singular and thus outside $\F$ by
Definition \ref{asm:finity-of-repair-costs}. Note that the canonic embedding of the reals within
	the hyperreals represents a number $a\in\R$ by the constant sequence
	$(a,a,\ldots)$. Picking up this idea would be critically flawed in our
	setting, as any such constant sequence would be preferred over any
	probability distribution (whose moment sequence diverges and thus overshoots
	$a$ inevitably and ultimately).

However, it is easy to work out
the moment sequence of the constant $X=a$ as $\E{X^k}=\E{a^k}=a^k$ for all
$k\in\N$. In this form, the $\preceq$-relation between the number $a$ and the
continuous random variable $Y$ supported on $\Omega=[1,b]$ can be decided as
follows:
\begin{enumerate}
  \item If $a<b$, then $a\preceq Y$: to see this, choose $\eps<(b-a)/3$ so
      that $f$ is strictly positive on a compact set $[b-\eps,b-2\eps]$
      (note that such a set must exist as $f$ is continuous and the support
      ranges until $b$). We can lower-bound the $k$-th moment of $Y$ as
      \begin{align*}
        \int_1^b y^kf(y)dy &\geq \big(\inf_{[b-2\eps,b-\eps]} f\big)\cdot \int_{b-2\eps}^{b-\eps} y^kdy\\
        & = \frac
1{k+1}\left[(b-\eps)^{k+1}-(b-2\eps)^{k+1}\right].
      \end{align*}
      Note that the infimum is positive as $f$ is strictly positive on the
      compact set $[b-2\eps,b-\eps]$. The lower bound is essentially an
      exponential function to a base larger than $a$, since $b-2\eps>a$,
  and thus (ultimately) grows faster than $a^k$.
  \item If $a>b$, then $Y\preceq a$, since $Y$ -- in any possible
      realization -- leads to strictly less damage than $a$. The formal
      argument is now based on an upper bound to the moments, which can be
      derived as follows:
      \begin{align*}
      \int_1^b y^k f(y)dy & \leq (\sup_{[1,b]} f)\cdot \int_1^b y^kdy = (\sup_{[1,b]} f)\frac 1{k+1}b^{k+1}.
      \end{align*}
      It is easy to see that for $k\to\infty$, this function grows slower
      than $a^k$ as $a>b$, which leads to the claimed $\preceq$-relation.
  \item If $a=b$, then we apply the mean-value theorem to the integral
      occurring in $\E{Y^k}=\int_1^a y^kf(y)dy$ to obtain an $\xi\in[1,a]$
      for which
      \[
        \E{Y^k}=\xi^k\underbrace{\int_1^af(y)dy}_{=1}=\xi^k\leq a^k
      \]
      for all $k$. Hence, $Y\preceq a$ in that case. An intuitive
      explanation stems from the fact that $Y$ may assign positive
      likelihood to events with less damage as $a$, whereas a deterministic
      outcome is always larger or equal to anything that $Y$ can deliver.
\end{enumerate}

\subsection{On the Interpretation of $\preceq$ and Inference}\label{sec:statistical-assurance}
The practical meaning of the $\preceq$-preference is more involved than just
a matter of comparing the first few moments. Indeed, unlike for IT risk
preferences based on \eqref{eqn:it-risk-formula}, the first moment can be
left unconstrained while $\preceq$ may still hold in either direction.

For general inference, the comparison of two distributions provides a
necessary basis (i.e., to define optimality, etc.). For example, (Bayesian)
decision theory or game theoretic models can be defined upon $\preceq$, via a
much deeper exploration of the embedding of $\F$ into the hyperreals (by
mapping a distribution to its moment sequence), such as the induced topology
and calculus based on it. In any case, however, we note that the previous
results may help in handling practical matters of $\preceq$ inside a more
sophisticated statistical decision or general inference process. For
practical decisions, some information can be obtained from the value $x_0$
that Theorem \ref{thm:tail-bounds} speaks about. This helps assessing the
meaning of the order, although the practical consequences implied by
$\leq_{st}$ or $\preceq$ are somewhat similar. The main difference is
\eqref{eqn:usual-stochastic-order} holding only for values $\geq x_0$ in case
of $\preceq$. The threshold can hence be found by numerically searching for
the largest (``right-most'') intersection point of the respective survival
functions; that is, for two distributions $F_1\preceq F_2$, a valid $x_0$ in
Theorem \ref{thm:tail-bounds} is any value for which $1-F_1(x)\leq 1-F_2(x)$
for all $x\geq x_0$. An approximation of $x_0$, e.g., computed by a bisective
search in common support of both distributions, then more accurately
describes the ``statistically best'' among the available actions, since
losses $>x_0$ are more likely for all other options. A practical decision, or
more general inference based on $\preceq$, should therefore be made upon
computing $x_0$ as an explicit auxiliary information, in order to assign a
quantitative meaning to ``extreme events'' in the interpretation underneath
Theorem \ref{thm:tail-bounds}. Further issues of practical decision making in
the context of IT risk management are discussed along the first empirical
example found in section \ref{sec:empirical-examples}.

Section \ref{sec:practical-examples} will not discuss (statistical) inference
since the details are beyond the scope of this work (we leave this to follow
up work). Instead, the following section will be dedicated to numerical
illustrations of $\preceq$ only, without assigning any decisional meaning to
the $\preceq$-preferred distributions. For each example, we will also give an
approximation (not the optimal) value of $x_0$.

\section{Numerical Examples}\label{sec:practical-examples}
Let us now apply the proposed framework to the problem of comparing effects
that are empirically measurable, when the precise action/response dynamics is
unknown. We start by looking at some concrete parametric models of extreme
value distributions first, to exemplify cases of numerical comparisons of
distributions with unbounded tails in Section
\ref{sec:examples-with-unbounded-tails}.

In Section \ref{sec:empirical-data}, we will describe a step-by-step
evaluation of our $\preceq$-ordering on empirical distributions. The sources
and context of the underlying empirical data sets are described in section
\ref{sec:empirical-examples}. From the data, we will compile non-parametric
distribution models, which are either normalized histograms or kernel density
estimators. On these, we will show how to decide the $\preceq$-relation using
the results from section \ref{sec:practicalities}.

\subsection{Comparing Parametric
Models}\label{sec:examples-with-unbounded-tails} We skip the messy algebra
tied to the verification of the criteria in Section
\ref{sec:unbounded-tails}, and instead compute the moments numerically to
illustrate the growth/divergence of moment sequences as implied by Lemma
\ref{lem:ordering-invariance}.

\begin{exa}[different mean, same variance]
Consider two Gumbel-distributions $X\sim F_1=Gumbel(31.0063, 1.74346)$ and
$Y\sim F_2=Gumbel(32.0063, 1.74346)$, where a density for $Gumbel(a,b)$ is
given by
\[
f(x|a,b) = \frac 1 b e^{\frac{x-a}{b}-e^{\frac{x-a}{b}}},
\]
where $a\in\R$ and $b>0$ are the location and scale parameter.

\begin{figure}[h!]
          \centering
            \includegraphics[scale=0.9]{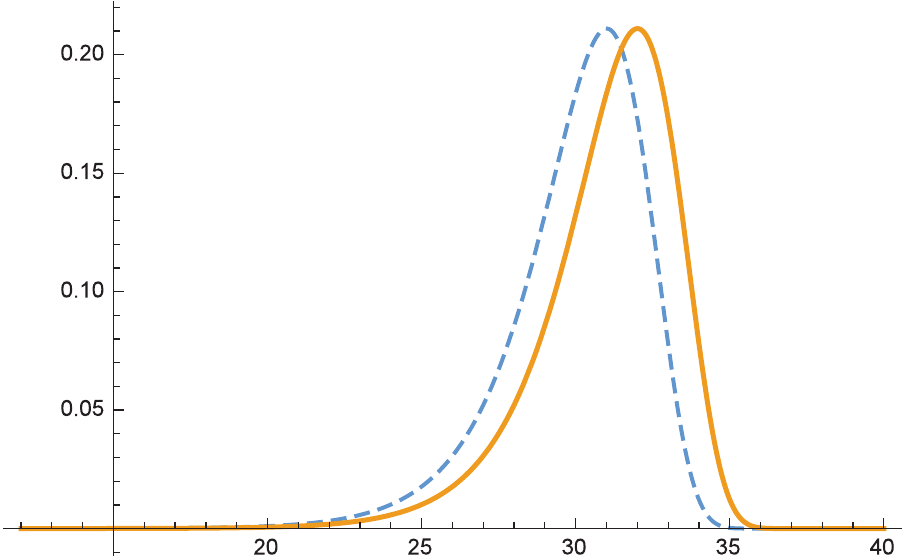}
          \caption{Comparing distributions with different means}\label{fig:different-mean-comparison}
        \end{figure}

Computations reveal that under the given parameters, the means are $\E{X}=30,
\E{Y}=31$ and $\Var{X}=\Var{Y}=5$. Figure \ref{fig:different-mean-comparison}
plots the respective densities of $F_1$ (dashed) and $F_2$ (solid line). The
respective moment sequences evaluate to
\begin{align*}
    \E{X^k}&=(30, 905, 27437.3, 835606, 2.55545\times 10^7,\ldots),\\
    \E{Y^k}&=(31, 966, 30243.3, 950906, 3.00162\times 10^7,\ldots),
\end{align*}
thus illustrating~ that $F_1\preceq F_2$. This is consistent with the
intuition that the preferred distribution gives \emph{less expected damage}.
The concrete region about which Theorem \ref{thm:tail-bounds} speaks is at
least for damages $>x_0=25$ (cf. Theorem \ref{thm:tail-bounds}).

\end{exa}

\begin{exa}[same mean, different variance]
Let us now consider two Gumbel-distributions $X\sim F_1=Gumbel(6.27294,
2.20532)$ and $Y\sim F_2=Gumbel(6.19073, 2.06288)$, for which $\E{X}=\E{Y}=5$
but $\Var{X}=8>\Var{Y}=7$.

\begin{figure}[h!]
          \centering
            \includegraphics[scale=0.9]{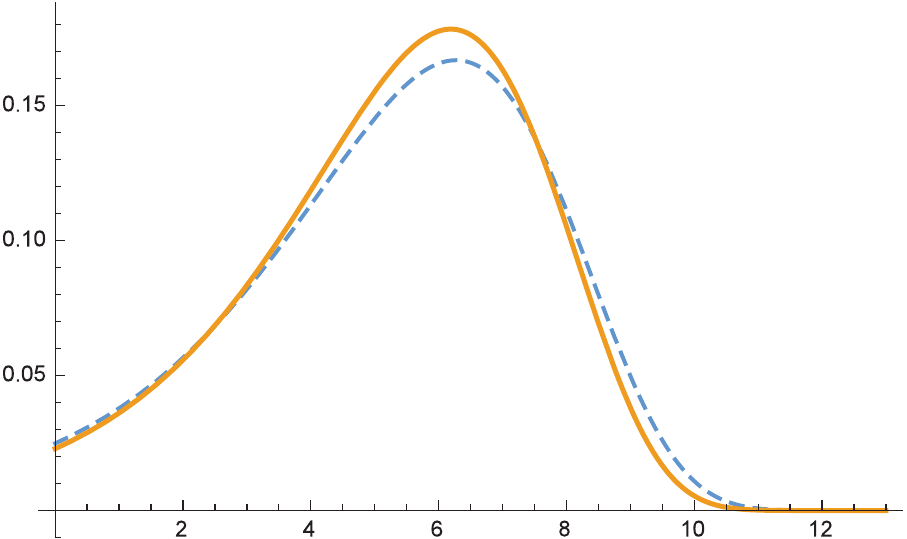}
          \caption{Comparing distributions with equal means but different variance}\label{fig:same-mean-comparison}
        \end{figure}

Figure \ref{fig:same-mean-comparison} plots the respective densities of $F_1$
(dashed) and $F_2$ (solid line). The respective moment sequences evaluate to
\begin{align*}
    \E{X^k}&=(5, 33, 219.215, 1654.9, 11957.8,\ldots),\\
    \E{Y^k}&=(5, 32, 208.895, 1517.51, 10806.8,\ldots),
\end{align*}
thus illustrating~ that $F_2\preceq F_1$. This is consistent with the
intuition that among two actions leading to the same expected loss, the
preferred one would be one for which the variation around the mean is
smaller; thus the loss prediction is ``more stable''.  The range on which
damages under $F_2$ are less likely than under $F_1$ begins at $x>x_0\approx
5.5$ (cf. Theorem \ref{thm:tail-bounds}).
\end{exa}

\begin{exa}[different distributions, same mean and
variance]\label{exa:equal-first-two-moments} Let us now consider a situation
in which the expected loss (first moment) and variation around the mean
(second moment) are equal, but the distributions are different in terms of
their shape. Specifically, let $X\sim F_1=Gamma(260.345, 0.0373929)$ and
$Y\sim Weibull(20,10)$, with densities as follows:

\[
f_{\text{Gamma}}(x|a,b)=
\left \{
\begin{array}{cc}
 \frac{b^{-a} x^{a-1} e^{-\frac{x}{b}}}{\Gamma (a)}, & x>0; \\
 0, & \text{otherwise} \\
\end{array}
\right.
\]

\[
f_{\text{Weibull}}(x|a,b)   =\left\{
\begin{array}{cc}
 \frac{a e^{-\left(\frac{x}{b}\right)^a} \left(\frac{x}{b}\right)^{a-1}}{b}, & x>0; \\
 0, & \text{otherwise} \\
\end{array}
 \right.
\]

\begin{figure}[h!]
  \centering
    \includegraphics[scale=0.9]{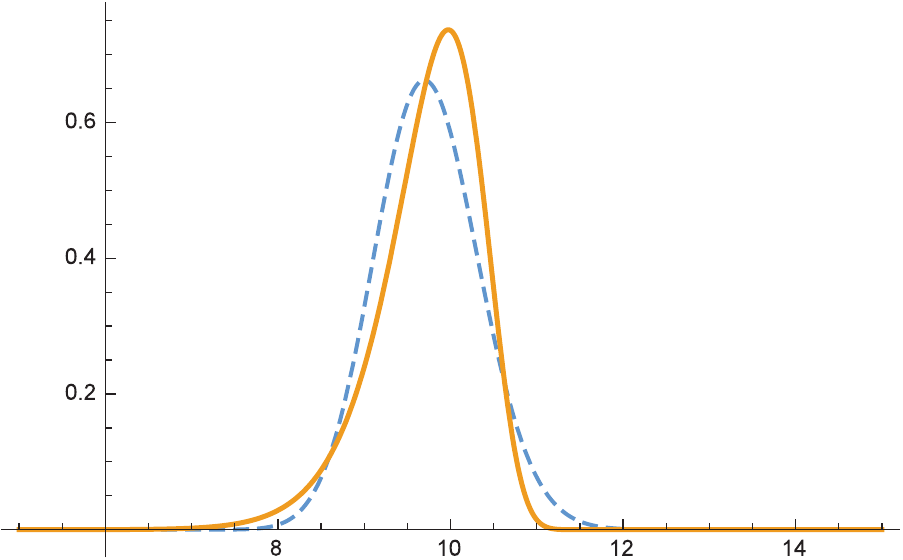}
  \caption{Comparing distributions with matching first two moments but different shapes}\label{fig:different-shape-comparison}
\end{figure}

 Figure \ref{fig:different-shape-comparison} plots the
respective densities of $F_1$ (dashed) and $F_2$ (solid line). The respective
moment sequences evaluate to
\begin{align*}
    \E{X^k}&=(9.73504, 95.1351, 933.259, 9190.01, 90839.7, \ldots),\\
    \E{Y^k}&=(9.73504, 95.1351, 933.041, 9181.69, 90640.2, \ldots),
\end{align*}
thus illustrating~ that $F_2\preceq F_1$. In this case, going with the
distribution that visually ``leans more towards lower damages'' would be
flawed, since $F_1$ nonetheless assigns larger likelihood to larger damages.
The moment sequence, on the contrary, unambiguously points out $F_2$ as the
preferred distribution (the third moment tips the scale here; cf.
\cite{Eichner&Wagener2011,Chiu2010,Wenner2002}). The statistical assurance
entailed by Theorem \ref{thm:tail-bounds} about an interval in which high
damage incidents are
less likely (at least) includes losses $>x_0\approx 10.3$. 

\end{exa}

\subsection{Empirical Test Data and Methodology}\label{sec:empirical-examples}
To demonstrate how the practical matters of comparing distributions work, we
will use three sets of empirical data, based on qualitative data from risk
estimation, and based on simulating a malware outbreak using
percolation.\todo{the context descriptions of the test data sets \#1 and \#2
were rewritten towards being more compact and tight.}

\paragraph{Test Data Set \#1 -- IT Risk Assessments:}
The common quantitative understanding of risk by the formula
\eqref{eqn:it-risk-formula} is easily recognized as the \emph{expectation}
(i.e., first moment) of a loss distribution. Although being standard in
quantitative IT risk management, its use is discouraged by the German Federal
Office of Information Security (BSI) \cite{Muench2012} for several reasons
besides the shortcomings that we discussed here (for example, statistical
data may be unavailable at the desired precision and an exact formula like
\eqref{eqn:it-risk-formula} may create the illusion of accuracy where there
is none \cite{Muench2012}).


Best practices in risk management (ranging up to norms like the ISO27005
\cite{ISO2011}, the ISO31000 \cite{ISO2009} or the OCTAVE Allegro framework
\cite{Caralli2007}) usually recommend the use of \emph{qualitative} risk
scales. That is, the expert is only asked to utter an opinion about the risk
being ``low/medium/high'' or perhaps using a slightly more fine-grained but
in any case ordinal scale. In a slight abuse of formalism, these categories
are then still carried into an evaluation of \eqref{eqn:it-risk-formula} (cf.
\cite{Goodpasture2004}) towards finding the decision with the ``least'' risk
in qualitative terms as \eqref{eqn:it-risk-formula} gives.

Categorical risk assessments are heavily used in the IT domain due to their
good systematization and tool support. Our first test data set is thus a risk
assessment made in terms of the Common Vulnerability Scoring System (CVSS)
\cite{Mell2007}. The CVSS ranks risks on a scale from 0 to 10, as a decimal
rounded up to one place behind the comma. Usually, these CVSS values come
from domain experts, so there is an intrinsic ambiguity in the opinions on
grounds of which a decision shall be made. Table \ref{tbl:cvss-example},
taken from \cite{Beck2016} (by kind permission of the author A.
Beck), shows an example of such expert data for two security system
installments being assessed by experts in the left and right part of the
table (separated by the double vertical line). The $\preceq$-ordering shall
now help to choose the better of the two options, based on the ambiguous and
even inconsistent domain expert inputs. For simplicity of the example, we did
not work with the fine-grained CVSS scores, but coarsened them into three
categories, i.e., intervals of scores $low=[0,3)$ (L),
$medium=[4,8)$ (M) and $high=[8,10]$ (H). We remark that the categorial assessment was added in this work, and is
	not from the source literature.

%
%

\begin{table}[t!]
  \centering
  \begin{tabular}{|p{1.2cm}|l|l|l|l|l|l||l|l|l|l|l|l|l|l|l|}
  \hline
  Expert (anony\-mized) & \rot{CN-863} & \rot{ER-881} & \rot{\"OL-968} & \rot{BA-576} & \rot{RC-813} & \rot{RR-745} & \rot{EN-720} & \rot{EF-375} & \rot{UE-941} & \rot{RI-740} & \rot{UM-330} & \rot{TR-790} & \rot{EE-677} & \rot{ER-640} & \rot{EE-489} \\\hline
Scenario & 	1 & 1 & 1	 & 1	 & 1	 & 1	 & 2	 & 2	 & 2 & 	2	 & 2 & 	2 & 	2 & 	2 & 	2 \\\hline
CVSS & 	10 & 6.4 & 9	 & 7.9	 & 7.1 & 	9	 & 10	 & 7.9	 & 8.2	 & 7.4	 & 10 & 	8.5	 & 9	 & 9 & 	8.7 \\\hline
Risk & H & M & H & M & M & H & H & M & H & M & H & H & H & H & H\\
  \hline
\end{tabular}
  \caption{Example CVSS Risk Assessment \cite{Beck2016}}\label{tbl:cvss-example}
\end{table}

%

\paragraph{Test Data Set \#2 -- Malware Outbreaks:}
Computer malware infections are continuously reported in the news, with an
early and prominent example having been the Stuxnet worm in 2008
\cite{Karn11}, which infected the Iranian uranium enrichment facilities. Ever
since, the control and supervision of cyber-physical systems has gained much
importance in risk management, since attacks on the computer infrastructure
may have wide effects ranging up to critical supply infrastructures such as
water supply, power supply, and many others (e.g., oil, gas or food supply networks, etc.).

The general stealthiness of such infections makes an exact assessment of risk
difficult. A good approach to estimate that risk is to apply outbreak simulation models, such
as, for example, using percolation theory \cite{Newman2002,Koenig2016}. These
simulations provide us with possible infection scenarios, in which the number
of infected nodes (after a fixed period of time), can be averaged into a
probability distribution describing the outcome of an infection. Repeating
the simulation with different system configurations yields various outcome
distributions. An example for a network with 20 nodes and 1000 repetitions per
simulation is displayed in Table \ref{tbl:malware-data}. The
$\preceq$-relation shall then help deciding which configuration is better in
minimizing the risk of a large outbreak.

\begin{table}
  \centering

    \begin{tabular}{|r|c|c|}\hline
size of the outbreak & config. 1 & config. 2\\\hline
1	&0	  &0      \\
2	&1	  &0      \\
3	&0	  &0      \\
4	&3	  &0      \\
5	&5	  &3      \\
6	&1	  &4      \\
7	&4	  &6      \\
8	&5	  &5      \\
9	&9	  &8      \\
10&	6	 & 25    \\
11&	13	&33     \\
12&	22	&39     \\
13&	29	&85     \\
14&	44	&131    \\
15&	86	&160    \\
16&	135&	164   \\
17&	182&	150   \\
18&	245&	113   \\
19&	173&	64    \\
20&	37	&10     \\

\hline
    \end{tabular}

  \caption{Simulated malware infection: number of occurrences of an outbreak of size $n$ under two configurations, after a fixed time period}\label{tbl:malware-data}
\end{table}


\paragraph{Test Data Set \#3 -- Nile Water Level:}
As a third data set, we use one that ships with the statistical software
suite \texttt{R}. Concretely, we will look at the dataset \texttt{Nile} that
consists of the measurements of the annual flow of the Nile river between
1871 and 1970. For comparisons, we will divide the data into two groups of 50
observations each (corresponding to years). The decision problem associated
with it is the question of which period was more severe in terms of water
level. Extending the decision problem to more than two periods would then
mean searching for a \emph{trend} within the data. Unlike a numerical trend,
such as a sliding mean, we would here have a ``sliding empirical
distribution'' to determine the trend in terms of randomness.

\subsection{Comparing Empirical Distributions}\label{sec:empirical-data}
With the three data sets as described, let us now look into how decisions
based on the empirical data can be made.

\paragraph{Categorical Data -- Comparing Normalized Histograms:}

Compiling an empirical distribution from the example CVSS data in Table
\ref{tbl:cvss-example} gives the histograms shown in Figure
\ref{fig:barplots}. Clearly, scenario 1 is preferable here, as it is less
frequently rated with high damage than scenario 2. On the contrary, the
decision is much less informed than in the case where the full numeric data
would have been used. We will thus revisit this example later again.


\begin{figure}
  \centering
  \includegraphics[scale=1]{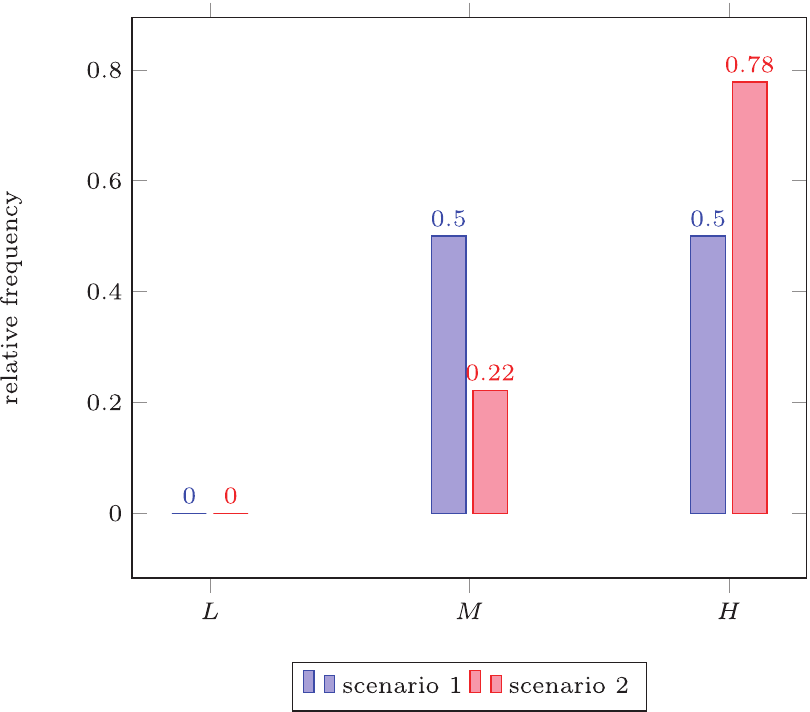}
  \caption{Comparing Empirical Distributions -- Test Data \#1 in Nominal Scale\\
  Outcome: ``scenario 1'' $\prec$ ``scenario 2'' (i.e., scenario 1 has lower
  security risk, based on the coarsened data), for medium and high categories ($x_0=~$'$M$').}\label{fig:barplots}
\end{figure}

For the simulated malware infection data in Table \ref{tbl:malware-data}, the
empirical distribution of the number of affected nodes as shown in Figure
\ref{fig:malware-counts-frequ} is obtained by normalization of the
corresponding histograms. In this case, configuration 2 is preferable as the
maximal damage of 20 node has occurred less often than in configuration 1.



\begin{figure}
  \centering
  \includegraphics[scale=1]{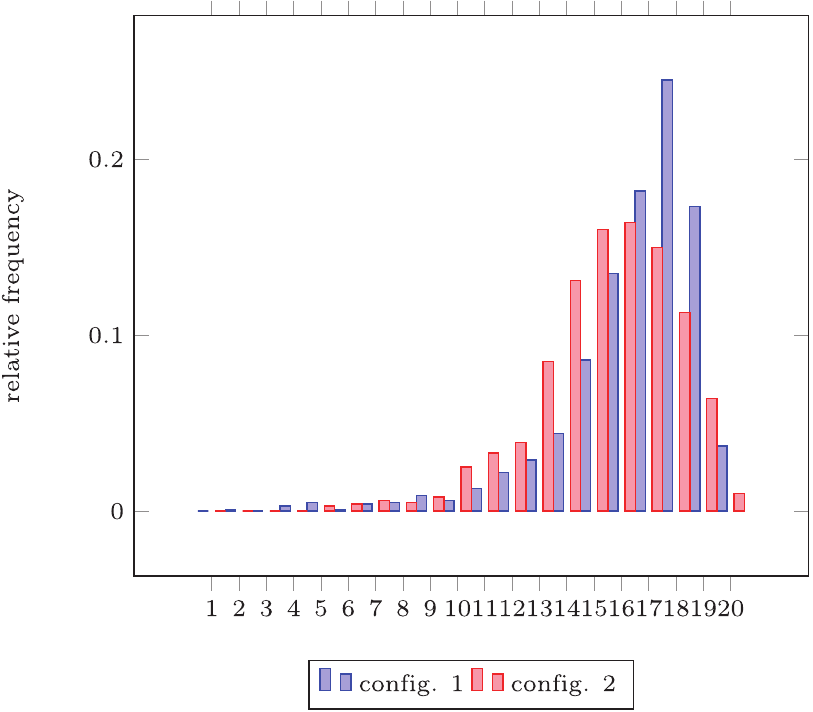}
\caption{Test Data \#2 -- (Simulated) Empirical Distribution of Malware
Outbreak Sizes under two Configurations\\Outcome: ``config. 2'' $\prec$ ``config. 1''
(i.e., the second configuration is more secure w.r.t. extreme outcomes, i.e.,
infections of $>x_0=9$ nodes)}\label{fig:malware-counts-frequ}
\end{figure}

\paragraph{Continuous Data -- Comparing Kernel Density Estimates:}
If the data itself is known to be continuous, then a nonparametric
distribution estimate can be used to approximate the unknown distribution.

In the following, let us write $\hat f$ to mean a general kernel density
estimate based on the data (observations) $x_1, \ldots, x_n$, of the form
\begin{equation}\label{eqn:kde}
\hat f_n(x) = \frac 1{n\cdot h}\sum_{i=1}^n K\left(\frac{x-x_i}h\right),
\end{equation}
where $K(x)$ is the chosen kernel function, and $h>0$ is a bandwidth
parameter, whose choice is up to any (of many existing) heuristics (see
\cite{Liu2009,Silverman1998} among others). Computing a kernel density
estimate from data is most conveniently done by invoking the \texttt{density}
command within the \texttt{R} statistical computing software \cite{RDCT2016}.

This comparison of two kernel density estimates (KDE) is illustrated in
Figure \ref{fig:KDE-comparison}. For that purpose, we divided the test
dataset \#3 (\texttt{data(Nile)} in \texttt{R}) into observations covering
the years 1871-1920 and 1921-1970. For both sets the density is estimated
with a Gaussian kernel and the default bandwidth choice \texttt{nrd0}
(Silverman's rule \cite{Silverman1998}) yielding a KDE $\hat{f}_1$ with
bandwidth $h_1=79.32$ for the years 1871-1920 and a KDE $\hat{f}_2$ with
bandwidth $h_2=45.28$ for the years 1921-1970. Further we have the maximal
observed values $x_{n_1}=1370$ and $y_{n_2}=1170$, and see that
\[
x_{n_1}+h_1=1449.32 > y_{n_2}+h_2 = 1215.28.
\]
Therefore (and also by visual inspection of Figure \ref{fig:KDE-comparison}),
the density for the period from 1871 until 1920 has had higher likelihoods
for a high water level, which became less in the period from 1921-1970, thus
indicating a ``down-trend'' by $\hat{f}_2 \prec \hat{f}_1$.

\begin{figure}
  \centering
  \includegraphics[scale=1]{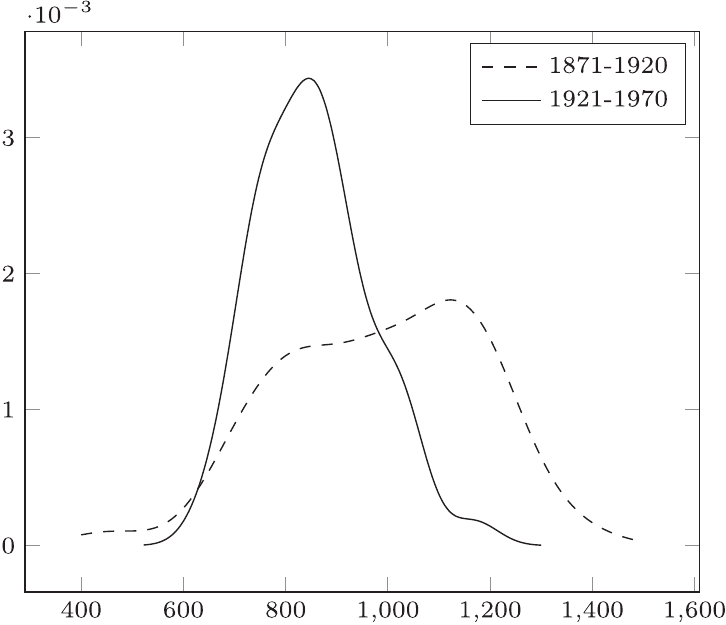}
  \caption{Comparison of two kernel density estimates -- Test Data \#3
  (Nile Water Level)\\Outcome: ``1921-1970'' $\prec$ ``1871-1920''
  (indicating that floodings at a water level $>x_0\approx 220$ (Theorem \ref{thm:tail-bounds}) have been more likely in the years before 1921).}\label{fig:KDE-comparison}
\end{figure}

\paragraph{Using Gaussian Kernels:}
Commonly, the kernel density approximation is constructed using Gaussian
kernels per default, in which case $K$ takes the form $K(x)=\frac
1{\sqrt{2\pi}}\exp(-\frac 1 2 x^2)$. Definition
\ref{asm:finity-of-repair-costs} is clearly not met, but there is also no
immediate need to resort to the extended version of $\preceq$ as given by
Definition \ref{def:extended-preference}. Indeed, if we simply truncate the
KDE at any point $a>1$ into the distribution $\hat f$ and remember that $K\in
C^\infty$, the truncated kernel density estimate is again a
$C^\infty$-density, as required by Lemma \ref{lem:derivative-comparisons}.

Returning to the CVSS example data in Table \ref{tbl:cvss-example}, we
constructed Gaussian kernel density estimates $f_1, f_2$, with bandwidths
$h_1\approx 0.798$ and $h_2\approx 0.346$ (using the the default Silverman's
rule in \texttt{R}); plots of which are given in Figure \ref{fig:kde-plots}.
Using the criterion of Lemma \ref{lem:derivative-comparisons} in connection
with the lexicographic ordering, we end up finding that $f_2\preceq f_1$, in
contrast to our previous finding. This is, however, only an inconsistency at
first glance, and nevertheless intuitively meaningful if we consider the
context of the decision and the effect of the nonparametric estimation more
closely:
\begin{itemize}
  \item Since scenario 2 is based on more data than scenario 1, the
      bandwidth $h_2$ is less than $h_1$. This has the effect of the
      distribution being ``more condensed'' around higher categories, as
      opposed to the distribution for scenario 1, whose tail is much
      thicker. Consequently, the decision is to prefer scenario 2 is
      implicitly based on the larger data set, and considers the higher
      uncertainty in the information about scenario 1.
  \item The Gaussian kernel has tails reaching out to $+\infty$, which also
      assigns positive mass to values outside the natural range of the
      input data ($1\ldots 10$ in case of CVSS). In many contexts,
      observations may not be exhaustive for the possible range (e.g.,
      monetary loss up to the theoretical maximum may -- hopefully -- not
      have occurred in a risk management process in the past). By
      construction, the KDE puts more mass on the tails the more data in
      this region is available. From a security perspective, this mass
      corresponds to \emph{zero-day exploit events}. Thus, such incidents
      are automatically accounted for by $\preceq$.
\end{itemize}

\begin{figure}
  \centering
  \includegraphics[scale=1]{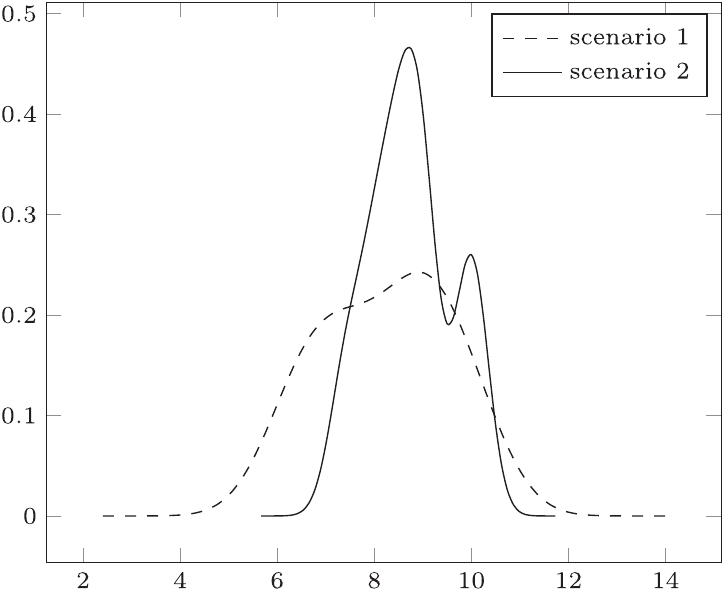}
\caption{Comparing Nonparametric Distribution Models -- Test Data \#1 (directly used)\\Outcome: ``scenario 2'' $\prec$ ``scenario 1'' (given a more refined view that in Figure \ref{fig:barplots}, scenario 1 has less security)}\label{fig:kde-plots}
\end{figure}

\section{Discussion}
\todo{the discussion has been rewritten towards being more tight and
	referring closer to the method laid out in the paper.}

Our proposed preference relation is designed for IT risk management. In this
context, decision makers often rely on scarce and purely subjective data
coming from different experts. Bayesian techniques that need large amounts of
data are therefore hard to apply (and somewhat ironically, a primary goal of
IT risk management is exactly minimizing the lot of incidents that could
deliver the data). Since the available information may not only be vague but
possibly also inconsistent, consensus finding by data aggregation is often
necessary. There exist various non-probabilistic methods to do this (such as
fuzzy logic, Dempster-Shafer theory, neural networks, etc.) and perform
extremely well in practice, but the interpretation of the underlying concepts
is intricate and the relation to values and business assets is not trivial.

To retain interpretability, data aggregation often means averaging (or taking
the median of) the available risk figures, in order to single out an optimal
action. This clearly comes at the cost of losing some information. Stochastic
orders elegantly tackle the above issues by letting the entire data go into a
probability distribution (and thus preserving all information), and defining
a meaningful ranking on the resulting objects. However, not all stochastic
orders are equally meaningful for the peculiarities of IT risk management.
For example, low damages are normally disregarded as being covered by the
system's natural resilience, i.e., no additional efforts are put on lowering
a risk that is considered as low already. The relevance of risks depends on
whether or not a certain acceptable damage threshold is exceeded. IT risk
managers typically care about significant (extreme) distortions and events
with high potential of damage but with only a limited lot of reported
evidence so far, such as zero day exploits or advanced persistent threats.

Consequently, a suitable ordering may reasonably ignore damages of low
magnitude, and focus on extreme outcomes, i.e., the tails of the respective
loss distributions. This is a major reason for our transition from the usual
stochastic order that takes into account the entire loss range (in fact all
$\R$, according to \eqref{eqn:usual-stochastic-order}) to one that explicitly
focuses on a left neighborhood of the loss maximum. In a converse approach to
the same problem, this could as well be used as a starting point to define an
order, but starting from moments instead and finishing with an ordering that
is about the heaviness of tails is an interesting lesson learned from our
proposed technique of using $^*\R$ to construct the ordering here. More
importantly, the rich structure of $^*\R$, being available without additional
labor, makes our ordering useable with optimization and game theory, so that
important matters of security economics can be covered as a by-product. This
non-standard technique of constructing an ordering is an independent
contribution of this work.\todo{The reviewer's comment on the possibility of
avoiding non-standard
	analysis is certainly true, and we acknowledge this remark here explicitly.
	Nevertheless, drawing fellow researcher's attention to this possibility
	appears important to us.}

Summarizing our point, decision making based on a stochastic ordering has the
appeal of a statistical fundament that is easy to communicate and, more
importantly, fits well into existing risk management standards (ISO 27000,
ISO 31000, etc.).

\paragraph{Outlook:}
The well defined arithmetic over $^*\R$, into which Theorem
\ref{thm:ordering-invariance} embeds the (risk) distribution models in $\F$,
lets us technically work with distributions like as if we were in a
topological field. This embedding offers an interesting unexplored (and
nontrivial) route of future research: though the operations on random
variables (say, addition or quotients) do not correspond to the same
operations in $^*\R$ (which is immediately evident from the definition), many
other operations and even functions of random variables can be studied in the
space $^*\R$ rather than on the set of distributions. So we can, for example,
do optimization theory over distributions but equipped with the full armory
of calculus known from the reals (that analogously holds in the space $^*\R$
by virtue of {\L}os' theorem or the transfer principle \cite{Robinson1966}).

Our ordering relation on the set of probability distributions can be extended
towards a theory of games on these spaces (this extension is based on the
topology that the order induces, upon which Nash's result on the existence of
equilibria can be re-established on our space of probability distributions).
First steps into applying the framework to competitive decision-situations
have been taken in \cite{Rass2015c}, and will be further detailed in follow
up research articles.







\section{Acknowledgments}
The authors wish to thank the anonymous reviewers for invaluable suggestions
and for bringing up a variety of interesting aspects to look at here and
future research. Their input greatly improved the readability and quality of
the text. This work was supported by the European Commission's Project No.
608090, HyRiM (Hybrid Risk Management for Utility Networks) under the 7th
Framework Programme (FP7-SEC-2013-1). The project is online found at
\url{https://hyrim.net}.


\section{References}

\begin{appendix}

\section{Proofs} The proofs here
first appeared in \cite{Rass2015a}, and are repeated for the sake of
completeness and convenience of the reader.
\subsection{Proof of Lemma
\ref{lem:ordering-invariance}}\label{apx:proof-of-ordering-invariance} We
first discuss the continuous case, which illustrates the basic idea that can
be applied alike to categorical and discrete distributions.

Let $f_1, f_2$ denote the densities of the distributions $F_1,F_2$. Fix the
smallest $b^*>1$ so that $\Omega:=[1,b^*]$ covers both the supports of $F_1$
and $F_2$. Consider the difference of the $k$-th moments, given by
\begin{align}
    \Delta(k) := \E{R_1^k}-\E{R_2^k} &= \int_{\Omega}x^k f_1(x)dx - \int_{\Omega}x^k f_2(x)dx\nonumber\\
    &=\int_{\Omega}x^k(f_1-f_2)(x)dx\label{eqn:difference-of-moments}.
\end{align}
Towards a lower bound to \eqref{eqn:difference-of-moments}, we distinguish
two cases:
\begin{enumerate}
  \item If $f_1(x)>f_2(x)$ for all $x\in\Omega$, then $(f_1-f_2)(x)>0$ and
      because $f_1,f_2$ are continuous, their difference attains a minimum
      $\lambda_2>0$ on the compact set $\Omega$. So, we can lower-bound
      \eqref{eqn:difference-of-moments} as
      $\Delta(k)\geq\lambda_2\int_{\Omega}x^kdx\To+\infty$, as
      $k\To\infty$.
  \item Otherwise, we look at the right end of the interval $\Omega$, and
      define
        \[
            a^* :=\inf\set{x\geq 1: f_1(x)>f_2(x)}.
        \]
     Without loss of generality, we may assume $a^*<b^*$. To see this, note
      that if $f_1(b^*)\neq f_2(b^*)$, then the continuity of $f_1-f_2$
      implies $f_1(x)\neq f_2(x)$ within a range $(b^*-\eps,b^*]$ for some
      $\eps>0$, and $a^*$ is the supremum of all these $\eps$. Otherwise,
      if $f_1(x)=f_2(x)$ on an entire interval $[b^*-\eps,b^*]$ for some
      $\eps>0$, then $f_1\not>f_2$ on $\Omega$ (the opposite of the
      previous case) implies the existence of some $\xi<b^*$ so that
      $f_1(x)<f_2(x)$, and $a^*$ is the supremum of all these $\xi$  (see
      Figure \ref{fig:density-lower-bounds} for an illustration). In case
      that $\xi=0$, we would have $f_1\geq f_2$ on $\Omega$, which is
      either trivial (as $\Delta(k)=0$ for all $k$ if $f_1=f_2$) or
      otherwise covered by the previous case.

      In either situation, we can fix a compact interval $[a,b]\subset
      (a^*,b^*)\subset[1,b^*]=\Omega$ and two constants
      $\lambda_1,\lambda_2>0$ (which exist because $f_1,f_2$ are bounded as
      being continuous on the compact set $\Omega$), so that the function
      \[
        \ell(k,x) := \left\{
                     \begin{array}{rl}
                       -\lambda_1x^k, & \hbox{if }1\leq x<a; \\
                       \lambda_2x^k, & \hbox{if }a\leq x\leq b.
                     \end{array}
                   \right.
      \]
        lower-bounds the difference of densities in
        \eqref{eqn:difference-of-moments} (see Figure
        \ref{fig:density-lower-bounds}), and

    \begin{figure}
          \centering
            \includegraphics[scale=0.9]{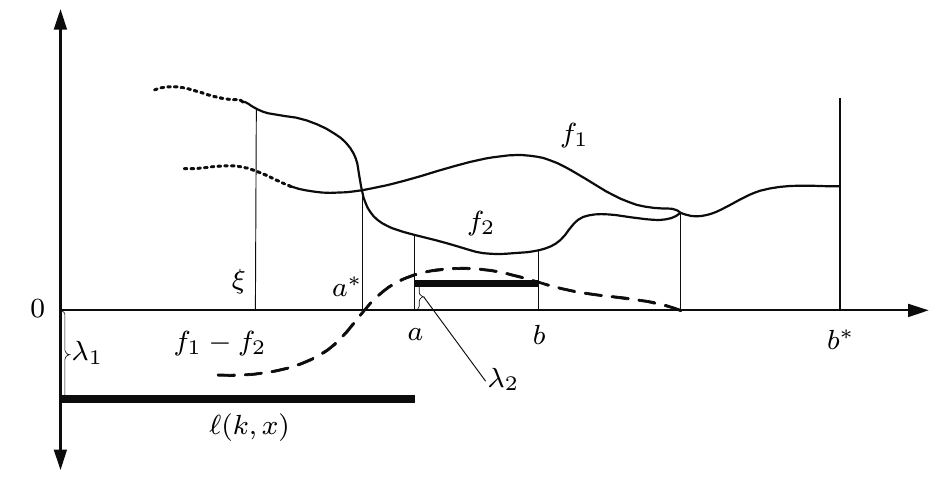}
          \caption{Lower-bounding the difference of densities}\label{fig:density-lower-bounds}
        \end{figure}

      \begin{align*}
        \Delta(k)=\int_1^{b^*}x^k(f_1-f_2)(x)dx &\geq \int_1^b\ell(x,k)dx \\
        &=-\lambda_1\int_1^ax^kdx + \lambda_2\int_a^bx^kdx\\
        &=-\frac{a^{k+1}}{k+1}(\lambda_1+\lambda_2)+\lambda_2\frac{b^{k+1}}{k+1}\To+\infty,
      \end{align*}
        as $k\To\infty$ due to $a<b$ and because $\lambda_1,\lambda_2$ are
        constants that depend only on $f_1,f_2$.

    In both cases, we conclude that, unless $f_1=f_2$, $\Delta(k)>0$ for
    sufficiently large $k\geq K$ where $K$ is finite. This establishes the
    lemma for continuous distributions.
\end{enumerate}

In the discrete or categorical case, the argument remains the same, only
adapted to looking at the finite set of values on which $f_1\geq f_2$. The
largest value less than $a$ above which equality holds until the end of the
support then determines the growth of the difference sequence in the same way
as was argued in Section
\ref{sec:preference-between-categorical-distr}.\hfill$\square$

\subsection{Proof of Theorem \ref{thm:tail-bounds}}\label{apx:tail-bound}
Let $f_1, f_2$ be the density functions of $F_1, F_2$. Call
$\Omega=\supp(F_1)\cup\supp(F_2)=[0,a]$ the common support of both densities,
and take $\xi=\inf\set{x\in\Omega: f_1(x)=f_2(x)=0}$. Suppose there were an
$\eps>0$ so that $f_1>f_2$ on every interval $[\xi-\delta,\xi]$ whenever
$\delta<\eps$, i.e., $f_1$ would be larger than $f_2$ until both densities
vanish (notice that $f_1=f_2=0$ on the right of $\xi$). Then the proof of
lemma \ref{lem:ordering-invariance} delivers the argument by which we would
find a $K\in\N$ so that $\E{X_1^k}>\E{X_2^k}$ for every $k\geq K$, which
would contradict $F_1\preceq F_2$. Therefore, there must be a neighborhood
$[\xi-\delta,\xi]$ on which $f_1(x)\leq f_2(x)$ for all $x\in
[\xi-\delta,\xi]$. The claim follows immediately by setting $x_0=\xi-\delta$,
since taking $x\geq x_0$, we end up with $\int_x^\xi f_1(t)dt\leq
\int_{x}^\xi f_2(t)dt$, and for $i=1,2$ we have $\int_x^\xi f_i(t)dt =
\int_x^a f_i(t)dt = \Prob{X_i>x}$.\hfill$\square$

\subsection{Proof of Lemma
\ref{lem:approximation-comparison}}\label{apx:criterion} Throughout the
proof, let $i\in\set{1,2}$. The truncated distribution density that
approximates $f_i$ is $f_i(x)/(F_i(a_n)-F_i(0))$, where $[0,a_n]$ is the
common support of $n$-th approximation to $f_1, f_2$. By construction,
$a_{n,i}\To\infty$ as $n\To\infty$, and therefore $F_i(a_n)-F_i(0)\To 1$ for
$i=1,2$. Consequently,
\[
    Q_n = \frac{F_1(a_n)-F_1(0)}{F_2(a_n)-F_2(0)}\To 1,\quad\text{ as }n\To\infty,
\]
and there is an index $N$ such that $Q_n > c$ for all $n\geq N$. In turn,
\[
    f_2(x)\cdot Q_n > f_2(x)\cdot c> f_1(x),
\]
and by rearranging terms,
\begin{equation}\label{eqn:truncated-comparison}
    \frac{f_1(x)}{F_1(a_n)-F_1(0)} < \frac{f_2(x)}{F_2(a_n)-F_2(0)},
\end{equation}
for all $x\geq x_0$ and all $n\geq N$. The last inequality
\eqref{eqn:truncated-comparison} lets us compare the two approximations
easily by the same arguments as have been used in the proof of lemma
\ref{lem:ordering-invariance}, and the claim follows.


\end{appendix}

\end{document}